\documentclass[journal]{IEEEtran}
\IEEEoverridecommandlockouts
\usepackage[utf8]{inputenc}
% The preceding line is only needed to identify funding in the first footnote. If that is unneeded, please comment it out.
\usepackage{cite}
\usepackage{amsmath,amssymb,amsfonts}
\usepackage{graphicx}
\usepackage{textcomp}
\usepackage{xcolor}
\usepackage{epstopdf}
\usepackage{amssymb}
\usepackage{amsmath}
\usepackage{algorithm}
\usepackage{algorithmicx}
\usepackage{algpseudocode}
\usepackage{amsmath}
\usepackage{multirow}
\usepackage{amsthm}
\usepackage{float}
\theoremstyle{plain}
\newtheorem{thm}{Theorem}

  % Use Input in the format of Algorithm
 % Use Output in the format of Algorithm
\def\BibTeX{{\rm B\kern-.05em{\sc i\kern-.025em b}\kern-.08em
    T\kern-.1667em\lower.7ex\hbox{E}\kern-.125emX}}
\makeatletter
\long\def\@makecaption#1#2{\ifx\@captype\@IEEEtablestring%
\footnotesize\begin{center}{\normalfont\footnotesize #1}\\
{\normalfont\footnotesize\scshape #2}\end{center}%
\@IEEEtablecaptionsepspace
\else
\@IEEEfigurecaptionsepspace
\setbox\@tempboxa\hbox{\normalfont\footnotesize {#1.}~~ #2}%
\ifdim \wd\@tempboxa >\hsize%
\setbox\@tempboxa\hbox{\normalfont\footnotesize {#1.}~~ }%
\parbox[t]{\hsize}{\normalfont\footnotesize \noindent\unhbox\@tempboxa#2}%
\else
\hbox to\hsize{\normalfont\footnotesize\hfil\box\@tempboxa\hfil}\fi\fi}
\makeatother

\begin{document}

\title{Edge-Cloud Collaboration Enabled Video Service Enhancement: A Hybrid Human-Artificial Intelligence Scheme}
\author{Dapeng Wu,~\IEEEmembership{Senior Member, IEEE,}
        Ruili Bao,~
        Zhidu Li,~\IEEEmembership{Member, IEEE,}\\
        Honggang Wang,~\IEEEmembership{Senior Member, IEEE,}
        Hong Zhang,
        and Ruyan Wang
        \thanks{Dapeng Wu, Ruili Bao, Zhidu Li, Hong Zhang and Ruyan Wang are with the School of Communication and Information Engineering, Chongqing University of Posts and Telecommunications, Chongqing 400065, China, and with Key Laboratory of Optical Communication and Networks in Chongqing, and with Key Laboratory of Ubiquitous Sensing and Networking in Chongqing (e-mail: $\{\text{wudp};\text{s190131250};\text{lizd};\text{hongzhang};\text{wangry}\}$@cqupt.edu.cn).}
        \thanks{Honggang Wang is with Electrical and Computer Engineering Department, University of Massachusetts Dartmouth, USA (e-mails: hwang1@umassd.edu).}
}

\maketitle

\begin{abstract}
In this paper, a video service enhancement strategy is investigated under an edge-cloud collaboration framework, where video caching and delivery decisions are made in the cloud and edge respectively.
We aim to guarantee the user fairness in terms of video coding rate under statistical delay constraint and edge caching capacity constraint.
A hybrid human-artificial intelligence approach is developed to improve the user hit rate for video caching.
Specifically, individual user interest is first characterized by merging factorization machine (FM) model and multi-layer perceptron (MLP) model, where both low-order and high-order features can be well learned simultaneously.
Thereafter, a social aware similarity model is constructed to transferred individual user interest to group interest, based on which, videos can be selected to cache.
Furthermore, a double bisection exploration scheme is proposed to optimize wireless resource allocation and video coding rate.
The effectiveness of the proposed video caching scheme and video delivery scheme is finally validated by extensive experiments with a real-world data set.
\end{abstract}

\begin{IEEEkeywords}
 Video service enhancement, statistical delay guarantee, video coding rate, hybrid human-artificial intelligence, edge-cloud collaboration
\end{IEEEkeywords}

\section{Introduction}
With the technology evolutions in communication, computing and caching, there are huge amount of services emerging and posing a profound influence on our daily life \cite{JNAVARRO}.
In particular, video services, such as live streaming, video on demand and etc., are more and more popular for users and will account for an extremely high ratio of future Internet traffic.
However, diverse requirements of quality of service (QoS) guarantees from those video services bring new challenges to the network operation \cite{8016573}.
Video service enhancement has consequently become an important topic in both academia and industria \cite{MSheng,RKali}.

A fundamental problem for video service enhancement is how to send videos to the users as soon as possible.
Typically, there are two ways to deal with such problem, which are edge caching and video delivery optimization \cite{Gzhou,JWu}.
On the one hand, edge caching has been proposed to cache popular contents at the network edge, which shortens the distance between the request and service sources.
Due to the finite caching capacity and great number of user requests, how to cache videos from the cloud server is a key issue for edge caching study.
In order to improve quality of user experience (QoE) and reduce operation cost, it is expected as many users as possible hit their requests and as many cached videos as possible are hit at the edge server at the same time.
On the other hand, a reasonable video delivery scheme is able to improve the QoS and QoE requirements for various users.
Due to the limitation of communication resources, how to schedule users and manage resources to meet diverse requirements of video services is a continuous challenge in network optimization.
In order to guarantee the user fairness as well as the QoS and QoE requirements, the network is expected to provide more granular resource management for each video stream.

In the literature, video caching and video delivery are usually studied separately.
On video caching study, data-driven approaches (e.g., machine learning) are favorite due to their advantages in user popularity and user interest mining \cite{8658196,9165221,8842609,9148740,9145039,8352848}.
On video delivery study, model-driven methods are more popular, since they are applicable to deal with various of resource optimization problems \cite{CFAN,XZHANG,MCHOI,AAKH,XZHANG1}.
However, how to integrate video caching and delivery together to further enhance video service performance remains unknown for a longtime.
Recently, such challenge has attracts lots of attention from academia and some insightful schemes has been proposed with a prior knowledge of user preference.
Nevertheless, a systematical understand on jointly optimizing the video caching and delivery performance through user interest prediction and resource allocation is still unavailable.

Motivated by this, we develop a edge-cloud collaboration framework to enhance video service performance, where the cloud server decides which videos should be cached, and the edge server decides resource allocations for each user.
A data-driven hybrid human-artificial intelligence approach is first proposed to mine individual user interest and then generate group interest at the cloud server, based on which, videos can be selected to cache.
A statistical delay guarantee model is then derived to reveal the relationship between the video coding rate and delay constraint.
Thereafter, we propose a double bisection exploration scheme to help the edge server to find out optimal bandwidth allocation for each video stream, which ensures user fairness in terms of video coding rate.
Finally, extensive experiments are carried out to verify the effectiveness of the proposed video service enhancement scheme with the help a real data-set.

The contributions of this paper are summarized as follows:

\begin{itemize}
\item A factorization machine and multi-layer perceptron merging scheme is proposed to predict individual user interest.
As FM and MLP can well represent low-order and high-order features respectively.
The proposed scheme is able to guarantee high prediction accuracy on user interest of different videos.
\item A social aware similarity model is proposed to characterize the similarity between individual user and the group.
In addition, a group interest model is constructed with consideration of the impacts of user similarity, positive emotion and negative emotion.
The proposed group interest model is able to guarantee high user hit rate and high content hit rate at the same time.
\item A statistical delay guarantee model is derived analytically, based on which a double bisection exploration scheme is further proposed to guide video delivery.
The computation complexity of the proposed scheme is proved to be logarithmic level.
Hence, it is applicable to the realistic networks.
\end{itemize}

The rest of the paper is organized as follows.
The related works are reviewed in Section II.
In Section III, system model is introduced and optimization problem is formulated.
Video caching scheme is studied in Section IV while video delivery scheme is investigate in Section V.
In Section VI, experiments are carried out and results are presented and discussed.
Finally, the conclusion is summarized in Section VII.

\section{Related Work}
In this section, related works about service enhancement can be classified into content caching and content delivery.
Usually, content caching and content delivery are studied separately.

In the literature, content caching schemes were usually designed based on the content popularity \cite{8658196}.
In \cite{9165221}, a probabilistic dynamical model was proposed for content popularity prediction in terms of temporal and spatial dependencies.
In \cite{8842609}, the dynamic caching process in the vector space was studied with state transition field, under the assumption of time-invariant content popularity.
The caching schemes proposed by \cite{8658196,9165221,8842609} mainly focused on the content characteristics while user interest was not considered.
However, it has been verified that the content request is more sensitive to the user interest \cite{8627946}.
Hence, some works recently have concentrated on user interest mining while designing caching scheme.
Particularly, the idea of recommendation system have been confirmed to be powerful in individual interest predication \cite{article,8523627}.
In \cite{9148740}, the caching and recommendation decisions were jointly optimized in the context of ``soft cache hit'' setup.
In \cite{9145039}, the joint caching and recommendation process was modeled as a single caching policy in a fog computing network, which significantly reduced the interest training complexity.
In \cite{8352848}, an optimization problem for the joint caching and recommendation decisions was formulated with object to maximize the caching hit rate under minimal controllable distortion of the inherent user content preferences, where a heuristic algorithm was proposed to realize lightweight control over recommendations.
The caching schemes designed by works \cite{9148740,9145039,8352848} were based on individual user interest, which requires large caching capacity to ensure the user hit rate.
However, as mentioned before, edge caching is expected to satisfy the requests of multiple users with limited caching capacity.
How to cache contents based on group interest is still an open problem.

Related works about video delivery usually focused on resource allocation and network performance optimization.
In \cite{CFAN}, successful content delivery probability and energy efficiency were analyzed based on stochastic geometry theory under an edge caching framework, where the most popular contents are cached in the macro BSs tier with and the less popular contents are cached in the helpers tier.
In \cite{XZHANG}, a near-optimal video layer placement scheme was studied to maximize the total amount of data traffic that can be requested from the local small cells.
In \cite{MCHOI}, the focus was on a wireless caching network, where Markov decision process was applied to analyze the dynamic decision making process for video quality and chunk amounts, and Lyapunov optimization was used to decision the caching node.
Thereafter, a video delivery strategy was proposed to maximize time-average streaming quality under a playback delay constraint in wireless caching networks.
In \cite{AAKH}, the video delivery delay was modeled from a probabilistic point of view.
Effective capacity theory was introduced to optimize the resource allocation and user scheduling in terms of sum of video quality of each user.
In \cite{XZHANG1}, the authors proposed an information-centric virtualization architecture over 5G multimedia big data wireless networks, where information-centric network, network functions virtualization and software-defined networks were integrated to guarantee the statistical delay-bounded QoS for multimedia big data transmissions.

Works \cite{8658196,9165221,8842609,9148740,9145039,8352848,CFAN,XZHANG,MCHOI,AAKH,XZHANG1} made great efforts on service enhancement throughput either content caching or content delivery.
However, how to combine content caching and delivery together is unavailable from those works.
In order to further improve service performance, researchers recently pay attention on joint content caching and delivery.
In \cite{YJIANG}, a dynamic distributed edge caching scheme was constructed in ultra-dense F-RANs.
The request service delay and fronthaul traffic load were modeled into a cost function that was further minimized.
In \cite{HWU1}, the authors focused on a heterogeneous vehicular network where WiFi roadside units, TV white space stations, and cellular base stations were coexist to cache and deliver contents.
The stable-matching-based caching scheme was proposed to minimize the average delivery delay by jointly considering file characteristics and network conditions.
In \cite{SZHANG}, the focus was on cooperative edge caching in large-scale user-centric mobile networks, where video delivery delay was minimized through optimizing content placement and cluster size in terms of traffic distribution, channel quality and file popularity.
In \cite{LPU1}, a joint online resource allocation, content caching and request routing scheme was studied to minimize the system cost in a cloud radio access network, where storage, VM reconfiguration, latency, and content migration were all taken into account.
In works \cite{YJIANG,HWU1,SZHANG,LPU1}, a priori content popularity or user preference was assumed to be available, such that the content caching and delivery could be jointly optimized.
In practice, user interest, however, belongs to subjective emotion that is difficult to predict.
In \cite{ZZhang1}, the non-negative matrix factorization technique was applied to predict user's preference, based on which a novel hierarchical proactive caching scheme was proposed for individual user.
Nevertheless, the user preference model therein was only applied to single user scenario.

In summary, there is still lack of systematical understand on how to jointly enhance video caching and delivery performance through user interest prediction and wireless resource allocations for a multi-user network, which motivates this paper.

\section{System Model}
\subsection{Network Scenario}
\begin{figure}[t!]
\centerline{\includegraphics[scale=0.33]{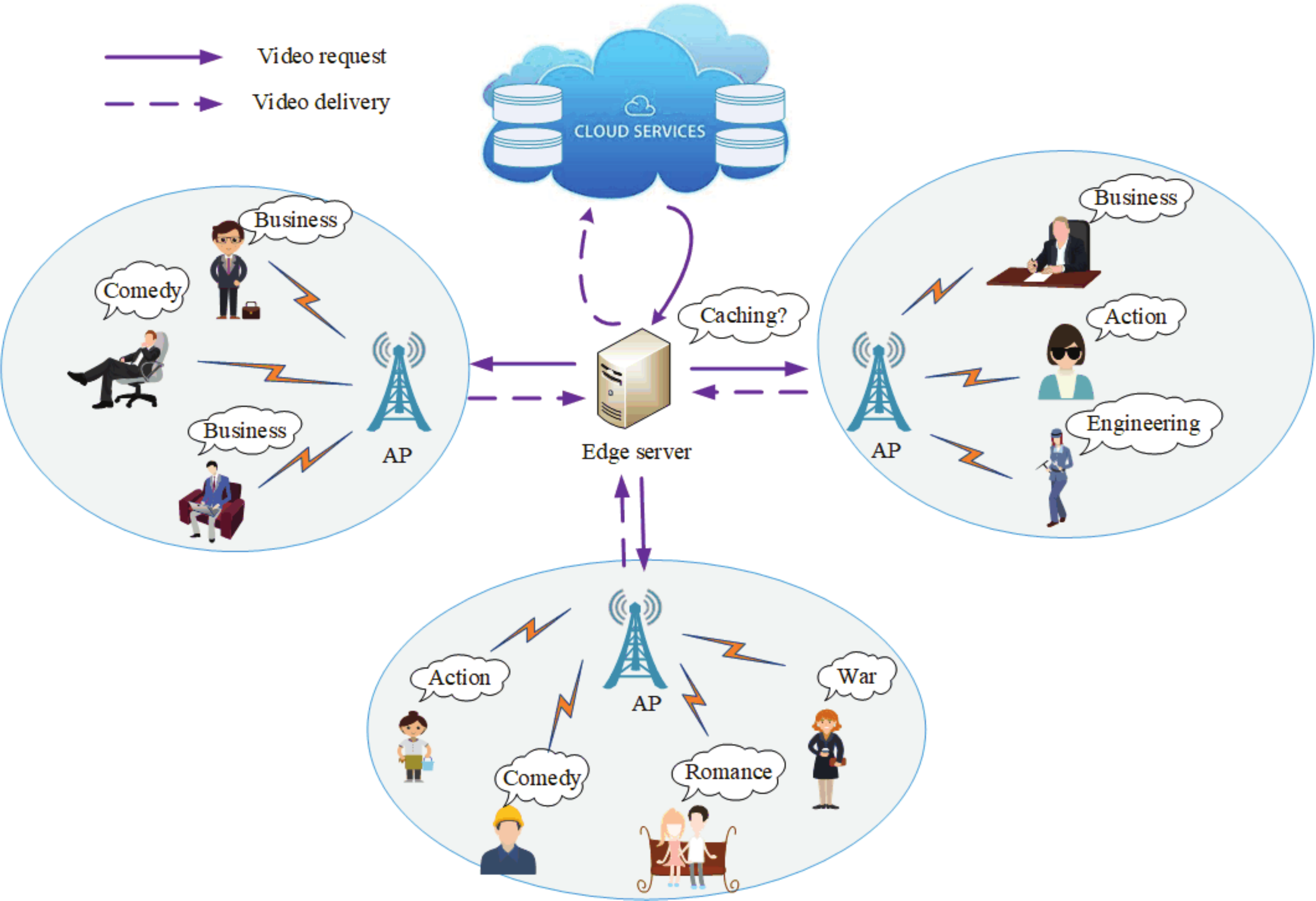}}
\caption{The considered edge caching network}
\label{fig1}
\end{figure}

The considered edge-cloud collaborative video caching and delivery framework is depicted in Fig. \ref{fig1}, where there are cloud server, edge server and multiple APs.
The cloud server stores all the videos that may be requested by users.
The videos stored at the cloud server are denoted by set ${\bf{\mathcal{M}}}=\{f_1,f_2,...,f_m,...,f_M\}$ with size $M$.
The cloud server is responsible for predicting the physical position and video preference of each user based on the information of user historical behaviors.
As a result, the information of user requests from each small cell can be predicted, based on which the cloud server determines which videos can be cached by each edge server periodically.
The edge server is deployed to cache videos for users in several small cells.
We use ${\bf{\mathcal{E}}}$ to denote the videos cached at the edge server and $E$ (in unit of videos) to denote the caching capacity of the edge server.
In each small cell, a user can send a video request to the edge server through the nearest AP.
If the requested video is requested by the edge server, it will be delivered to the user directly.
Otherwise, the edge server will first request the video from the cloud server and then transmit it to the user, which leads to an extra delay.

In addition to video caching, the edge server is also responsible for radio resource allocations for each downlink of video transmission from an AP to a user.
The users within the service coverage of the edge server are denoted by set $\mathcal{N}=\{U_1,U_2,...,U_n,...,U_N\}$ with size $N$.
The downlink channels allocated to the users are assumed to be orthogonal.
The total bandwidth of the considered network is denoted by $B$ and the bandwidth allocated to the $n$th user is denoted by $B_n$.
According to Shannon's theorem, the channel capacity of the $n$th user holds as
\begin{equation} \label{channel capacity}
\begin{aligned}
R_n(i)=B_n\log_2(1+\frac{p_nh_n(i)l_n(i)}{N_0B_n})
\end{aligned},
\end{equation}
where $p_n$ denotes the downlink transmission power, $h_n(i)$ denotes the channel gain due to small-scale fading, $l_n(i)$ denotes the channel gain due to large-scale fading, and $N_0$ denotes the power spectral density of white Gaussian noise.
Besides, $i$ is used to index the time block (TB) during a video rate adaption period, where the length of one TB is equal to the channel coherence time $T$.
Typically, the coherence time is in the level of milliseconds while the video rate adaptation is usually carried out in the seconds timescale.
Hence, the timescale of video rate adaptation is much larger than the channel coherence time.
It is assumed that the channel is quasi-static flat block fading.
In other words, $h_n$ is invariant within a TB but independently and identically distributed (i.i.d) among different TBs.
In addition, the position of a user is usually static when he or she is watching a video, which implies the time scale of channel varying caused by large-scale fading is not smaller than that of video rate adaptation.
Therefore, $l_n$ can be considered as constant among different TBs.

\subsection{Delay Model}
The total video delivery delay in the $i$th TB is denoted by $D_n(i)$.
According to Fig. \ref{fig1}, the video delivery delay includes three parts, i.e., the transmission delay from the cloud server to the edge server $D_n^{CE}(i)$, that from the edge server to the AP $D_n^{EA}(i)$, and that from the AP to the $n$th user $D_n^{AU}(i)$, where the first two parts belong to wire communications while the third part belong to wireless communications.
We have
\begin{equation} \label{total delay}
\begin{aligned}
D_n(i)=D_n^{CE}(i)(1-\sum_{f_m \in \mathcal{M}}\alpha_{n,m}\beta_m)+D_n^{EA}(i)+D_n^{AU}(i)
\end{aligned},
\end{equation}
where $\alpha_{n,m}=1$ means video $f_m$ is requested by the $n$th user, and $\alpha_{n,m}=0$ otherwise.
Besides, $\beta_m=1$ means video $f_m$ is cached by the edge server, i.e., $f_m \in \mathcal{E}$, and $\beta_m=0$ otherwise.

For the wire communication part, the transmission delay can be model by a constant due to the reason that the wire bandwidth allocated to a video stream is fixed during a rate adaption period.
Furthermore, as the distance between the edge server and the APs is much shorter than that between the cloud server and the edge server, the transmission delay from the edge server to the AP can be neglected.
Hence, we can let $D_n^{CE}(i)=d^{C}$ and $D_n^{EA}(i)\approx 0$, there holds
\begin{equation} \label{total delay2}
\begin{aligned}
D_n(i)=d_n^{C}(1-\sum_{f_m \in \mathcal{M}}\alpha_{n,m}\beta_m)+D_n^{AU}(i)
\end{aligned}.
\end{equation}

On the other hand, as wireless channel is time-varying, it is infeasible to provide deterministic delay guarantee for users.
A queue is therefore inserted at the edge server for each user stream to absorb the mismatch between the arrival and service rate due to the channel variations.
In the subsequence, time interval $[j,i)$ is used to represent the time from the $j$th TB to the $i$th TB.
The cumulative amount of stream arrivals for the queue corresponding to the $n$th user is denoted by $A_n(j,i)=V_n(i-j)T$ during $[j,i)$, where $V_n$ denotes the video coding rate during a video rate adaption period.
Similarly, the corresponding cumulative amount of departures is denoted by $A_n^*(j,i)$.
It is easily verified that for a queue with input $A_n(j,i)$ and output $A_n^*(j,i)$, there holds \cite{SNetCal}
\begin{equation} \label{minplus}
\begin{aligned}
A_n^*(0,i) = \inf_{0\leq j \leq i } \{A_n(0,j)+C_n(j,i)\}
\end{aligned},
\end{equation}
where $C_n(j,i)=\sum_{t=j}^{i-1}R_n(t)T$ denotes the cumulative amount of video steam that the AP can transmit during $[j,i)$.
For the video stream on the wireless channel, the transmission delay from the AP to the $n$th user $D_n^{AU}(i)$ can be modeled as follows
\begin{equation} \label{wiredelay}
\begin{aligned}
D_n^{AU}(i)=\inf\{d_n^{AU}:A_n(0,i)\leq A_n^*(0,i+d_n^{AU})\}
\end{aligned},
\end{equation}
which means that the last bit of video stream arriving at the queue in the $i$th block is transmitted in the $(i+d_n^{AU})$th block.

In order to characterize the delay performance more intuitively, we model the delay metric according to the philosophy behind 5G ultra reliable low latency communications (uRLLC).
In specific, the delay constraint is modeled from a statistical point of view, there holds
\begin{equation} \label{delay constraint}
\begin{aligned}
\text{DVP}_n (d_n) \triangleq \Pr\{D_n(t)>d_n\}\leq \epsilon_n
\end{aligned},
\end{equation}
meaning that the probability that video delivery delay exceeding the maximum tolerance $d_n$ should be control within $\epsilon_n$.

\subsection{Problem Formulation}
Intuitively, the video delay performance for a user can be improved from the following the video caching and video delivery aspects.
On one hand, the video delivery delay will be reduced if the user request can be predicted with a high accuracy, since the transmission from the cloud server to the edge server can be avoided.
On the other hand, larger channel capacity, i.e., more allocated resources, can also guarantee a lower transmission delay.
Additionally, the delay performance can be improved by lowering video coding rate.
However, lower video coding rate results in worse quality of user experience (QoE).
In order to guarantee the fairness of QoE among users, a optimization problem is formulated to maximize the minimum individual video coding rate for the considered network.
\begin{equation} \label{p1}
\begin{aligned}
\text{\textbf{P1}}~~~~~& {\max\limits_{\mathbf{V},\mathbf{B},\bf{\beta}} }~~~~{\mathop {\min\limits_{U_n \in \mathcal{N}} }}~{V_n}\\
\text{s.t.}~~~~~&\text{C1}:~{\text{DVP}_n (d_n)\leq \epsilon_n,~\forall U_n \in \mathcal{N}}\\
&\text{C2}:~{\sum\limits_{U_n \in \mathcal{N}}B_n\leq B}\\
&\text{C3}:~{\sum\limits_{f_m \in \mathcal{M}}\alpha_{n,m}\leq 1,~\forall U_n \in \mathcal{N}}\\
&\text{C4}:~{\sum\limits_{f_m \in \mathcal{M}}\beta_{m}\leq E}\\
\end{aligned},
\end{equation}
where $\mathbf{V}=\{V_n:U_n \in \mathcal{N}\}$ denotes the video coding rate vector, $\mathbf{B}=\{B_n:U_n \in \mathcal{N}\}$ denotes the bandwidth allocation vector, and $\mathbf{\beta}=\{\beta_m:f_m \in \mathcal{M}\}$ denotes the video caching vector.
In problem P1, constraint C1 means the statistical delay requirement of each user, C2 means that the resource allocated to the users should not be greater than the total amount in the considered network, C3 represent that each user can only request one video at a time, C4 means that the number of cached video should not exceed the caching capacity of the edge sever.

Note that C1-C2 belong to video delivery constraints while C3-C4 belong to video caching constraints.
Besides the video caching process and the video delivery process are independent with each other.
Hence, problem P1 can be decoupled to a video caching subproblem and a video delivery subproblem.
Typically, a higher user hit rate in the edge caching can guarantee better video delay performance, which further sustains higher video coding rate.
The video caching subproblem regarding to user hit rate is formulated as follows
\begin{equation} \label{p2}
\begin{aligned}
\text{\textbf{P2}}~~~~~& \max\limits_{\bf{\beta}} \frac{\sum\limits_{U_n \in \mathcal{N}}\sum\limits_{f_m \in \mathcal{M}}\alpha_{n,m}\beta_m}{\sum\limits_{U_n \in \mathcal{N}}\sum\limits_{f_m \in \mathcal{M}}\alpha_{n,m}}\\
\text{s.t.}~~~~~&\text{C3}:~{\sum\limits_{f_m \in \mathcal{M}}\alpha_{n,m}\leq 1,~\forall U_n \in \mathcal{N}}\\
&\text{C4}:~{\sum\limits_{f_m \in \mathcal{M}}\beta_{m}\leq E}\\
\end{aligned}
\end{equation}
Since video coding rate is directly correlated with the video delivery performance, the video delivery subproblem regarding to resource allocation is formulated as follows
\begin{equation} \label{p3}
\begin{aligned}
\text{\textbf{P3}}~~~~~& {\max\limits_{\mathbf{V},\mathbf{B}} }~~~~{\mathop {\min\limits_{U_n \in \mathcal{N}} }}~{V_n}\\
\text{s.t.}~~~~~&\text{C1}:~{\text{DVP}_n (d_n)\leq \epsilon_n,~\forall U_n \in \mathcal{N}}\\
&\text{C2}:~{\sum\limits_{U_n \in \mathcal{N}}B_n\leq B}\\
\end{aligned}
\end{equation}
The solution of problems P2 and P3 will be introduced in the following two sections.

\section{Video Caching Scheme}
In this section, a hybrid human-artificial intelligence caching scheme is studied to improve the user hit rate and content hit rate for the edge server.
Compared with the objective factors, the caching performance is more sensitive to subjective factors, such as user interest, user relationship and etc.
The impact of subjective factors on user behavior prediction is infeasible to represent analytically.
Hence, it is difficult or impossible to find out the optimal solution for problem P2.
In this paper, we propose to design a video caching scheme with data-driven approach, which can also be considered as a sub-optimal solution for problem P2.
In specific, we first propose a machine learning framework to mine the user interest on videos.
Considering the caching constraint, a social computing idea is learned to studied the similarity between each user and the group users.
Based on the obtained similarity model, the videos can be selected to cache at the edge server.

\begin{figure*}[t!]
\centerline{\includegraphics[height=3in,width=4in]{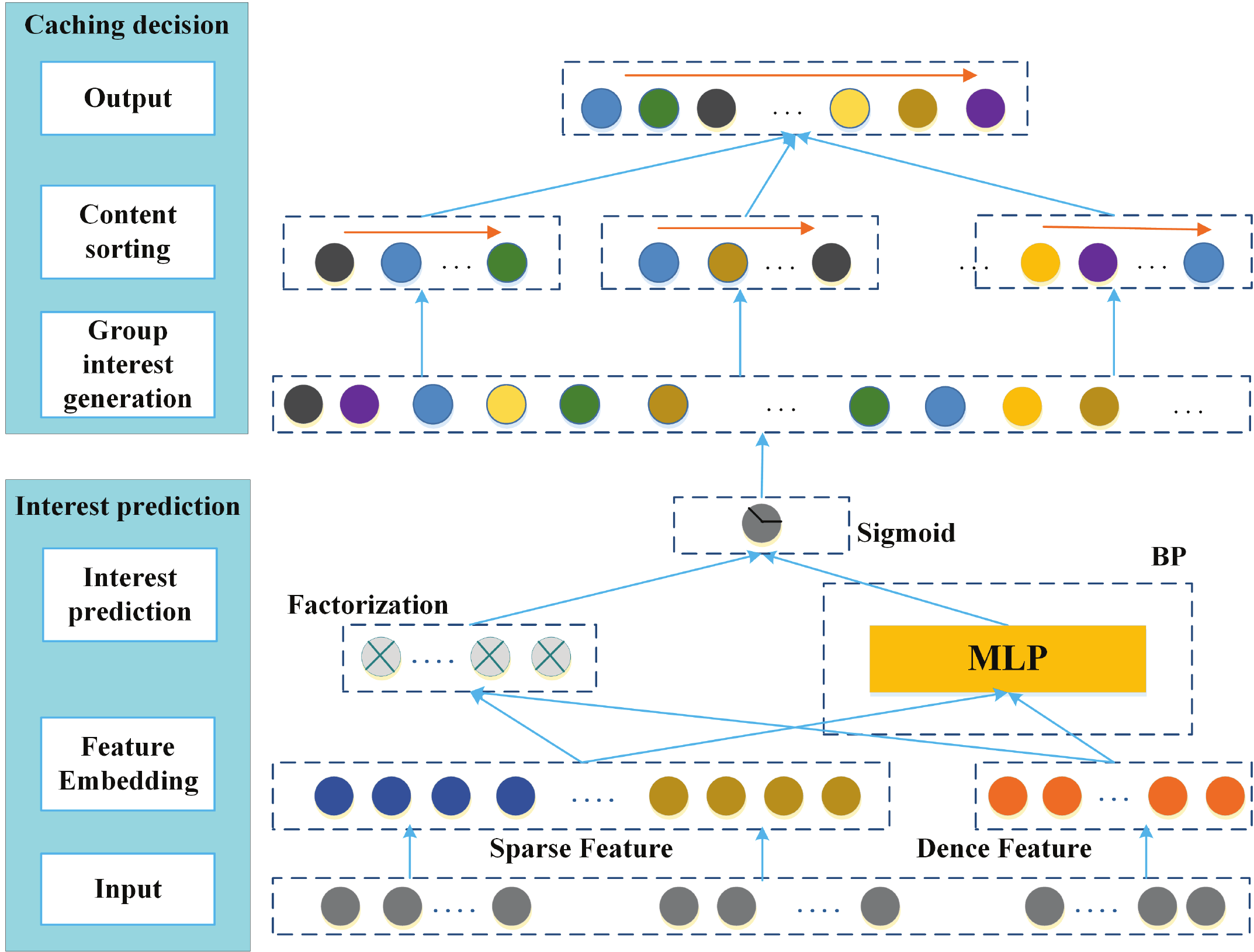}}
\caption{Edge-cloud collaborative video caching and delivery framework}
\label{fig2}
\end{figure*}

In practical network, in order to reduce the traffic load for the backhaul and backbone network, the timescale of video caching duration should be usually much greater than that of video rate adaption.
Therefore, each user may request various videos while each video may be requested by different users during a caching decision duration.
As a caching decision is highly correlated with user interest, we focus on interest mining with the help of user attributes, video characteristics and historical user behaviors.
The overall idea of our caching scheme can be summarized as follows,

\begin{itemize}
\item Step-1: User attribute modeling.
User attributes include age, occupation, residence and etc.
According to psychology, user attributes have great difference on user interest \cite{wang2011cognitive}.
Typically, user interest can be characterized more accurately if more types of user attributes are available.

\item Step-2: Video characteristic modeling.
Content characteristic include video type, the published year and etc.
Intuitively, video characterizes have impact on user's decision.
The more details can a video be characterized, the more accurate user interest prediction can be guaranteed.

\item Step-3: Interest prediction modeling.
In order to decide which videos to be cached, the interest of each user on each video can be analyzed.
As the user attributes and video characteristics include both low-order and high-lower features.
A machine learning based prediction model is needed to learn the individual user interest from the aforementioned features.

\item Step-4: Social-aware video caching.
To improve the user hit rate, the obtained individual user interest should be merged into group interest.
Specifically, social relationship between each user and the considered group should be taken into account, since a user with stronger relationship may reflect the group interest more accurately.
According to the group interest on the candidate videos from the cloud server, edge caching is finally completed.
\end{itemize}

In summary, we construct a hybrid human-artificial intelligence caching framework (HHAICF) as depicted in \ref{fig2}.
The HHAICF includes the interest prediction part and the caching decision part.
The interest prediction part is responsible for predicting the user preference on each video.
This part consists of the input layer, the feature embedding layer and interest prediction layer.
The caching decision part is responsible for caching contents based on group interest.
This part consists of group interest generation layer, content sorting layer and output layer.

\subsection{Machine Learning Based Individual Interest Prediction}
Typically, a user's interest on a given video can be classified as like or unlike, which can be modeled as a binary classification problem.
It has been proved that FM can realize the low-order features crossing while MLP can learn from the high-order features well \cite{cheng2016wide,DBLP:journals/corr/GuoTYLH17}.
Hence, we propose to combine FM with MLP, such that the complex features can be learned and the simple features can be retained at the same time.

Firstly, as labeled features have no practical meaning, we convert them into sparse features to reduce the data variability.
In contrast, the features with practical significance (e.g., year) should be converted into dense features for better representation.
For instance, the input feature ${\bf{x}}= \{1,2,3,1991,1992,1995\}$ can be divided into labeled feature ${{\bf{x}}_1}$ and feature with practical significance ${{\bf{x}}_2}$, where
\begin{equation}
{\rm{{{\bf{x}}_1} = \{1,2,3\}}},
{\rm{{{\bf{x}}_2} = \{1991,1992,1995\}}}.
\end{equation}
We can then embed ${{\bf{x}}_1}$ as a sparse feature with one-hot encoding, there holds
\begin{equation}
{{\bf{x}}_1} = \{001,010,100\}.
\end{equation}
Therefore, the labeled feature turns to be machine-readable with representation.

For feature with practical significance ${{\bf{x}}_2}$, it is embedded into a dense feature through normalization as follows
\begin{equation}
{{\bf{x}}_2} = \frac{{{{\bf{x}}_2} - \mu }}{\sigma}.
\end{equation}
where $\mu$ denotes the mean of ${\bf{x}}_2$, and $\sigma$ denotes the variance.
The reason of normalization is to reduce the difference between the values for feature representation.
In order to reduce the training complexity, the embedded features are shared by the FM module and MLP module.

In the FM module, the first-order matric is denoted by $\bf{w}_\text{FM}$, the second-order weight matric is denoted by $\bf{W}_\text{FM}$.
Moreover, $\bf{W}_\text{FM}$ can be decomposed as
\begin{equation}
\bf{W}_\text{FM} = {\bf{Y}}{{\bf{Y}}^T},
\end{equation}
where
\begin{equation}
{\bf{Y}} = \left( {\begin{array}{*{20}{c}}
{{y_{11}}}&{{y_{12}}}& \ldots &{{y_{1N}}}\\
{{y_{21}}}&{{y_{22}}}& \ldots &{{y_{2N}}}\\
 \vdots & \vdots & \ddots & \vdots \\
{{y_{M1}}}&{{y_{M2}}}& \ldots &{{y_{MN}}}
\end{array}} \right) = \left( {\begin{array}{*{20}{c}}
{\bf{y}_1}\\
{\bf{y}_2}\\
 \vdots \\
{\bf{y}_M}
\end{array}} \right).
\end{equation}

The output of the FM module is obtained as
\begin{equation}\label{factorization machine}
\begin{array}{l}
{{\bf{Z}}_{\rm{FM}}} =  < \mathbf{w},\mathbf{x} >  + \sum\limits_{m = 1}^{M-1} {\sum\limits_{i = m + 1}^M { < {{\bf{y}}_m},{{\bf{y}}_i} > {x_m}} } {x_i}\\
 =  < \mathbf{w},\mathbf{x} >  + \frac{1}{2}\sum\limits_{n = 1}^{N} {({{(\sum\limits_{m = 1}^{M} {{y_{m,n}}{x_m}} )^2}} - \sum\limits_{m = 1}^M {y_{m,n}^2x_m^2} )}
\end{array},
\end{equation}
where $<\mathbf{w},\mathbf{x}>$ denotes the inner product operation between vectors $\mathbf{w}$ and $\mathbf{x}$.
In this process, the lower-order features are crossed in pairs, such that the implicit relationship between features can be mined.

In MLP module, the ReLU function is employed as activation function.
The output of MLP is expressed as
\begin{equation}\label{deep}
{{\bf{Z}}_{\mathop{\rm MLP}\nolimits} } = \text{ReLU}({\bf{W_\text{MLP}}}*{\bf{x}} + {\bf{b}}),
\end{equation}
where $\bf{W}_\text{MLP}$ denotes the weight matric of the MLP module, ${\bf{b}}$ denotes the bias vector.

Thereafter, the output of FM module ${{\bf{Z}}_{\rm{FM}}}$ and that of the MLP module ${{\bf{Z}}_{\rm{MLP}}}$  are concated and processed by the Sigmoid function.
The final optimal parameter $\bf{W}=\{\bf{w}_\text{FM}, \bf{W}_\text{FM},\bf{W}_\text{MLP}\}$ is then obtained by continuous gradient update.
Hence, the output of the interest prediction model holds as
\begin{equation}
{\bf{\widetilde{r}}} = \text{Sigmoid}({{\bf{Z}}_\text{FM}},{{\bf{Z}}_\text{MLP}}).
\label{giacf}
\end{equation}
Note that ${\bf{\widetilde{r}}}$ is a vector where the elements therein are all valued from 0 to 1.
It quantifies the relevance between the users and videos.
To optimize the learning weight ${\bf{W}}$, we use the cross-entropy function as the loss function, i.e.,
\begin{equation}\label{loss}
\text{Loss} =  - \frac{1}{||\mathbf{x}||}\sum\limits_{x \in \mathbf{x}} {(r\log {{\widetilde{r}}}(x) + (1 - r)\log (1 - {{\widetilde{r}}}(x)))} ,
\end{equation}
where $r \in \{ 0,1\} $ denotes the label of training set, and ${{\widetilde{r}}}(x)$ denotes the interest prediction corresponding to input $x$, $||\mathbf{x}||$ denotes the size of the training set.

The combination of FM and MLP ensures both low-order and high-order features can be represented, such that the individual user interest can be captured.
Moreover, based on individual user interest, group interest can be further represented to optimize the content caching.

\subsection{Social-Aware Group Interest Based Video Caching}
With the obtained interest prediction model, the individual user interest on the new videos can be predicted as
\begin{equation}\label{newinterest}
{\bf{\widetilde{r}}}^{new} = {\bf{\widetilde{r}}}(\mathbf{x}^{new}),
\end{equation}
where $\mathbf{x}^{new}$ denotes the new input corresponding to the new videos that never be watched by the considered users, ${\bf{\widetilde{r}}}^{new}$ denotes the set of prediction results.

In order to guarantee a high user hit rate, it is necessary to mine the group interest from the individual user interest on the new videos.
In this paper, we resort to the idea of social computing to model the group interest.
In specific, two user with similar history behavior of video watching probably have similar interest on videos.
The interest similarity between $U_{n_1}$ and $U_{n_2}$ is usually characterized by cosine function, i.e.,
\begin{equation} \label{cosin}
\text{simc}_{n_1,n_2}=\frac{\mathcal{X}(n_1)\cup \mathcal{X}(n_2)}{\sqrt{|\mathcal{X}(n_1)||\mathcal{X}(n_2)|}},
\end{equation}
where $\mathcal{X}(n_1)$ denotes the video set that $U_{n_1}$ have watched before, and $|\mathcal{X}(n_1)|$ denotes the size of set $\mathcal{X}(n_1)$.
Generally, some users may watch some popular videos before though their interest on those videos are different.
In order to avoid misleading on interest similarity estimation by popular videos, we adjust the classical cosine similarity model by introducing a penalty factor, there holds
\begin{equation} \label{simi}
\text{simi}_{n_1,n_2}=\frac{\sum\limits_{\mathcal{X}(n_1)\cup \mathcal{X}(n_2)}\frac{1}{\ln(1+|\mathcal{I}(m)|)}}{\sqrt{|\mathcal{X}(n_1)||\mathcal{X}(n_2)|}}.
\end{equation}
Here $\mathcal{I}(m)$ denotes the user set that have watched the video $f_m$ and $|\mathcal{I}(m)|$ denote the corresponding size.
The more popular $f_m$ is, the greater $|\mathcal{I}(m)|$ holds.
Furthermore, the similarity between the $U_n$ and the group can be modeled as the sum of the similarity between $U_{n_1}$ and each user, which holds as
\begin{equation} \label{groupsi}
\text{simg}_{n}=\sum_{U_{n_2}\in \mathcal{N} \&\&n_2\neq n}\frac{\sum\limits_{\mathcal{X}(n)\cup \mathcal{X}(n_2)}\frac{1}{\ln(1+|\mathcal{I}(m)|)}}{\sqrt{|\mathcal{X}(n)||\mathcal{X}(n_2)|}}.
\end{equation}
A user with higher similarity with the group means that the interest of this user on videos can better represent the group interest.
Hence, we further model the impact of similarity between the user and the group by the following expression,
\begin{equation} \label{weight}
a_n^{si}=\frac{\text{simg}_{n}-\min\{\text{simg}_{n}:U_n \in \mathcal{N}\}}{\max\{\text{simg}_{n}:U_n \in \mathcal{N}\}-\min\{\text{simg}_{n}:U_n \in \mathcal{N}\}}
\end{equation}

Note that the interest prediction for a user on a video can be classified into positive case or negative case.
Let $\widetilde{r}^{new}_{n,m} \in {\bf{\widetilde{r}}}^{new}$ denotes the preference of $U_n$ on the $m$th video.
Besides, we use $\delta$ to denote the threshold, where $\widetilde{r}^{new}_{n,m}>\delta$ means that the $U_n$ represents positive emotion to the $m$th video, otherwise the $U_n$ represents negative emotions to the $m$th video.
Accordingly, the proportion of users that represent positive emotion to the $m$th video and that represent negative emotion holds respectively
\begin{equation} \label{proportion}
\begin{aligned}
a_{m}^{po}=\frac{\sum\limits_{U_n \in \mathcal{N}}\mathbf{I}_{\{\widetilde{r}^{new}_{n,m}\geq \delta\}}}{N}\\
a_{m}^{ne}=\frac{\sum\limits_{U_n \in \mathcal{N}}\mathbf{I}_{\{\widetilde{r}^{new}_{n,m}< \delta\}}}{N}\\
\end{aligned},
\end{equation}
where $\mathbf{I}_{\text{event}}$ is an indicator function.
Specifically, $\mathbf{I}_{\text{event}}=1$ if event is true, otherwise $\mathbf{I}_{\text{event}}=0$.

Based on the obtained individual interest on the new videos, we model the group interest on the $m$th new video by the following expression
\begin{equation} \label{score}
\text{Pre}_m= \frac{\sum\limits_{U_n \in \mathcal{N}}(a_m^{po}a_n^{si}\widetilde{r}^{new}_{n,m}\mathbf{I}_{\{\widetilde{r}^{new}_{n,m}\geq \delta\}}+a_m^{ne}a_n^{si}\widetilde{r}^{new}_{n,m}\mathbf{I}_{\{\widetilde{r}^{new}_{n,m}\leq \delta\}})}{N}.
\end{equation}
The intuitions behind the group interest model are as follows.
Firstly, $a_n^{si}$ characterizes the impact of the similarity between $U_n$ and the group on the group interest prediction.
The higher similarity between $U_n$ and the group, the heavier weight should be allocated to $U_n$ to ensure video prediction meet the interest of most of the users in the group better.
Secondly, $a_{m}^{po}$ is introduced to enhance the impact of positive case on the group prediction.
With this regard, we can make sure the video that is preferred by most of the users in the group to be a high prediction score.
Thirdly, $a_m^{ne}$ is used to improve the selection possibility of the video that is strongly preferred by some users but may not disgust other users.

According to (\ref{score}), the group interest on the candidate videos can be obtained and sorted in the descending order.
For an edge server with caching capacity of $E$ ($C\ll M$) videos, the videos of with top $E$ prediction scores can be cached.
In other words, the caching decision can be obtained as
\begin{equation}\label{cade}
\beta_m=
\begin{cases}
1, ~~\text{Pre}_m~ \text{is ~within ~the ~top}~E~\text{prediction~scores}\\
0, ~~\text{otherwise}
\end{cases}.
\end{equation}

The set of contents that are cached is finally obtained as
\begin{equation}\label{finalcache}
\mathcal{E} =\{f_m:\beta_m==1,~m=1,2,...,E\}.
\end{equation}
The hybrid human-artificial intelligence caching scheme is summarized as Algorithm 1.

\begin{algorithm}
  \caption{Hybrid human-artificial intelligence caching scheme}
  \label{alg:Framwork2}
  \begin{algorithmic}[1]
  \Require
      Interaction information between users and contents $\mathbf{x}$, caching capacity of edge server $E$;
    \Ensure
    Video caching decision $\beta$ and videos selected to cache $\mathcal{E}$;
    \State  Data set partition and obtain the training set $\mathbf{x}^{tr}$, the test set $\mathbf{x}^{te}$, and the set of new contents $\mathbf{x}^{new}$;
    \State \textbf{for} epoch=$1,2,...,e,...$ \textbf{do}
        \State \quad Train weights with FM, and obtain user interest ${{\bf{Z}}_{\rm{FM}}} $ according to (\ref{factorization machine});
        \State \quad  Train weights with MLP, and obtain user interest ${{\bf{Z}}_{\rm{MLP}}} $ according to (\ref{deep});
        \State \quad  Predict user interest by merging the results from MLP and FM according to (\ref{giacf});
        \State \quad Compute the loss according to (\ref{loss});
        \State \quad \textbf{if} the $e$-th Loss \textless the $(e-1)$-th Loss \textbf{do}
            \State \quad \quad \textbf{Continue};
        \State \quad \textbf{else}  \textbf{do}
            \State \quad \quad Obtain optimal weight ${\bf{W}}$ and break;
        \State \quad \textbf{end if}
    \State \textbf{end for}
    \State Calculate user interest for $\mathbf{x}^{new}$ according to (\ref{newinterest});
    \State Calculate the similarity between each user and the group according to (\ref{groupsi});
    \State Calculate weights for group interest according to (\ref{weight}) and (\ref{proportion});
    \State \textbf{for} $f_m \in \mathcal{M}$ \textbf{do}
        \State \quad Calculate the group interest for the $m$th new video according to (\ref{score}).
    \State \textbf{end for}
    \State Obtain $\beta$ from (\ref{cade}) and $\mathcal{E}$ from (\ref{finalcache});
  \end{algorithmic}
\end{algorithm}

\section{Video Delivery Scheme}
In this section, the video delivery scheme is designed based on the edge caching result $\beta$ and the user request $\{\alpha_{n,m}\}$.
To optimize the wireless resource and video coding rate in problem P3, we first derive the delay violation probability, which is summarized as the following theorem.
\begin{thm}
For a user requesting video with delay requirement $d_n$, if video coding rate and channel capacity are set as $V_n$ and $R_n(i)$ respectively, the corresponding delay violation probability is upper bounded by
\begin{equation}\label{devio}
\begin{aligned}
\text{DVP}_n(d_n)&=\Pr\{D_n(i)>d_n\}\\
&\leq \mathbb{E}[e^{-\theta_nR_n(i)T}]^{(d_n-d_n^{C}(1-\sum\limits_{f_m \in \mathcal{M}}\alpha_{n,m}\beta_m))}
\end{aligned}
\end{equation}
for any $\theta_n$ satisfying
\begin{equation}\label{stability}
\begin{aligned}
V_n \leq -\frac{\ln\mathbb{E}[e^{-\theta_n R_n(i)T}]}{\theta_n T}
\end{aligned},
\end{equation}
where $\mathbb{E}[\cdot]$ represents the expectation operator.
\end{thm}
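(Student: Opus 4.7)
The plan is to decompose the total delay in \eqref{total delay2} into its deterministic wire part and its stochastic wireless part, and then attack the wireless portion with an exponential Markov (Chernoff) bound combined with the min-plus representation in \eqref{minplus}--\eqref{wiredelay}. Setting $d_n^w \triangleq d_n - d_n^{C}(1-\sum_{f_m \in \mathcal{M}}\alpha_{n,m}\beta_m)$ removes the deterministic component from the tail event, reducing the target to showing that $\Pr\{D_n^{AU}(i) > d_n^w\} \leq \mathbb{E}[e^{-\theta_n R_n(i)T}]^{d_n^w}$ whenever $\theta_n$ satisfies \eqref{stability}.

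The second step substitutes \eqref{minplus} into \eqref{wiredelay} and turns the event $\{D_n^{AU}(i) > d_n^w\}$ into $\{A_n(0,i) > A_n^*(0, i+d_n^w)\}$, which by unfolding the infimum is the union over $j$ of $\{A_n(0,i) - A_n(0,j) - C_n(j, i+d_n^w) > 0\}$. Plugging in $A_n(j,i)=V_n(i-j)T$ and $C_n(j,i)=\sum_{t=j}^{i-1}R_n(t)T$ and making the change of index $k=i-j$, together with the i.i.d.\ stationarity of $\{R_n(t)\}$ across TBs, recasts the event as a random-walk overshoot
\begin{equation*}
\sup_{k \geq 0}\Bigl\{V_n k T - \sum_{t=1}^{k+d_n^w}R_n(t)T\Bigr\}>0.
\end{equation*}

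I would then apply Chernoff's inequality with parameter $\theta_n>0$ to each $k$ and the union bound, using independence across TBs to factorize the moment generating function, yielding
\begin{equation*}
\Pr\Bigl\{V_n k T > \sum_{t=1}^{k+d_n^w}R_n(t)T\Bigr\}\leq e^{\theta_n V_n k T}\mathbb{E}[e^{-\theta_n R_n T}]^{k+d_n^w}.
\end{equation*}
Pulling out the common factor $\mathbb{E}[e^{-\theta_n R_n T}]^{d_n^w}$ isolates the claimed exponent, and the stability condition \eqref{stability}, equivalent to $e^{\theta_n V_n T}\mathbb{E}[e^{-\theta_n R_n T}]\leq 1$, controls the residual sum over $k$. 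An equivalent and cleaner route is to view $M_k=\prod_{t=1}^{k}e^{\theta_n(V_n T-R_n(t)T)}$ as an exponential supermartingale under \eqref{stability} and invoke Ville's maximal inequality directly.

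The main obstacle I anticipate is the bookkeeping in the second step: the asymmetry between the arrival cutoff $i$ and the service cutoff $i+d_n^w$ has to be translated into a clean i.i.d.\ random walk event whose Chernoff bound separates exactly the $d_n^w$ factor claimed in \eqref{devio}. Treating \eqref{stability} as the defining equation of the QoS exponent (binding the constraint), as is standard in effective-capacity theory, is the mechanism that eliminates the geometric residual and delivers the stated bound without an additional prefactor.
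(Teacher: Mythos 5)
Your proposal follows essentially the same route as the paper's proof: reduce the tail event to the wireless backlog event via \eqref{total delay2} and \eqref{wiredelay}, unfold the min-plus convolution \eqref{minplus} into a supremum of a random walk, and kill it with an exponential bound whose per-step factor is $e^{\theta_n V_n T}\mathbb{E}[e^{-\theta_n R_n T}]\leq 1$ under \eqref{stability}; this is precisely what the paper does in its steps (a)--(d), invoking ``supermartingale theory and Chernoff bound'' at step (b). One caveat: your primary mechanism (Chernoff on each $k$ plus the union bound, then summing the geometric residual) does \emph{not} deliver the stated prefactor-free bound. With ratio $\rho=e^{\theta_n V_n T}\mathbb{E}[e^{-\theta_n R_n T}]$, the union bound leaves a factor $\sum_{k\geq 0}\rho^{k}=1/(1-\rho)\geq 1$, which diverges exactly when you bind the stability constraint --- so ``treating \eqref{stability} as an equality'' worsens rather than eliminates the residual. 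The route you call ``equivalent and cleaner,'' namely viewing $M_k=\prod_{t=1}^{k}e^{\theta_n(V_nT-R_n(t)T)}$ as a nonnegative supermartingale and applying Ville's (Doob's) maximal inequality $\Pr\{\sup_k M_k\geq a\}\leq \mathbb{E}[M_0]/a$, is in fact the only one of the two that yields \eqref{devio} as stated, and it is the one the paper actually uses. With that substitution made primary rather than optional, your argument matches the paper's.
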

\begin{proof}
According to (\ref{total delay2}), event $\{D_n(i)>d_n\}$ is equivalent to event $\{D_n^{AU}(i)>d_n-d_n^{C}(1-\sum\limits_{f_m \in \mathcal{M}}\alpha_{n,m}\beta_m)\}$.
Further according to the delay model (\ref{wiredelay}), event $\{D_n^{AU}(i)>d_n-d_n^{C}(1-\sum\limits_{f_m \in \mathcal{M}}\alpha_{n,m}\beta_m)\}$ implies event $\{A_n(0,i)>A_n^*(0,i+d_n-d_n^{C}(1-\sum\limits_{f_m \in \mathcal{M}}\alpha_{n,m}\beta_m))\}$, we consequently have $\{D_n^{AU}(i)>d_n-d_n^{C}(1-\sum\limits_{f_m \in \mathcal{M}}\alpha_{n,m}\beta_m)\} \subseteq \{A_n(0,i)>A_n^*(0,i+d_n-d_n^{C}(1-\sum\limits_{f_m \in \mathcal{M}}\alpha_{n,m}\beta_m))\}$.
Besides, as the randomness of channel capacity $R_n$ depends on the small-scale fading $h_n$ that is i.d.d, $R_n$ is also an i.i.d variable.
Moreover, the video coding process is independent of channel fading, the arrival process and service process of the user stream queue are independent.
Let $\Delta_n=(1-\sum\limits_{f_m \in \mathcal{M}}\alpha_{n,m}\beta_m))$, there holds
\begin{equation}
\begin{aligned} \notag
&\Pr\{D_n^{AU}(i)>d_n-d_n^{C}\Delta_n\}\\
\leq &\Pr\{A_n(0,i)-A_n^*(0,i+d_n-d_n^{C}\Delta_n>0\}\\
\overset{(a)}{=}&\Pr\{A_n(0,i)-\inf_{0\leq j\leq {i+d_n-d_n^{C}\Delta_n}}\{A_n(j) \\
&~~~~~~~~~~~~~~+C_n(j,i+d_n-d_n^{C}\Delta_n)\}>0\}\\
=&\Pr\{\sup_{0\leq j\leq i}\{A_n(j,i)-C_n(j,i)\}>C_n(0,d_n-d_n^{C}\Delta_n)\}\\
=&\Pr\{\sup_{0\leq j\leq i}\{e^{A_n(j,i)-C_n(j,i)}\}>e^{C_n(0,d_n-d_n^{C}\Delta_n)}\}\\
\overset{(b)}{\leq}&\mathbb{E}[e^{-\theta_n C_n(0,d_n-d_n^{C}\Delta_n)}]\mathbb{E}[e^{\theta_n A_n(0,1)}]\mathbb{E}[e^{-\theta_n C_n(0,1)}]\\
\overset{(c)}{=}&\mathbb{E}[e^{-\theta_n \sum_{i=1}^{d_n-d_n^{C}\Delta_n}R_n(i)T}]e^{\theta_n V_nT}\mathbb{E}[e^{-\theta_n R_n(i)T}]\\
\overset{(d)}{\leq}&\prod_{i=1}^{d_n-d_n^{C}\Delta_n}\mathbb{E}[ e^{-\theta_n R_n(i)T}]\\
{=}&\mathbb{E}[e^{-\theta_nR_n(i)T}]^{(d_n-d_n^{C}(1-\sum\limits_{f_m \in \mathcal{M}}\alpha_{n,m}\beta_m))}\\
\end{aligned}
\end{equation}
Here, step (a) is according to (\ref{minplus}).
In step (b), we apply the supermartingale theory and Chernoff bound \cite{lzd}.
Step (c) is according to the definitions of $A_n$ and $C_n$.
Besides, step (c) and step (d) hold due to the reason that $R_n$ is i.i.d, where $\mathbb{E}[e^{-\theta_nR_n(i)}]$ is always fixed for $\forall i\geq 0$.
Therefore, Theorem 1 is proved.
\end{proof}

Theorem 1 also reveals that the maximum video coding rate for a user holds as
\begin{equation} \label{Vn}
\begin{aligned}
V_n^{\max}=-\frac{\ln\mathbb{E}[e^{-\theta_n R_n(i)T}]}{\theta_n T},
\end{aligned}
\end{equation}
when other conditions are fixed.
In addition, $\theta_n$ can be regarded as a free parameter.
If the delay requirement and delay violation probability for a video stream are given, according to (\ref{devio}), there holds
\begin{equation}
\begin{aligned} \label{theta}
&{\mathbb{E}[e^{-\theta_nR_n(i)T}]^{(d_n-d_n^{C}(1-\sum\limits_{f_m \in \mathcal{M}}\alpha_{n,m}\beta_m))} = \epsilon_n}\\
&~~~~~~~~~~~~~~~~~~~~~~~~~~\Updownarrow\\
&\theta_n=-\frac{\ln\epsilon_n}{V_nT(d_n-d_n^{C}(1-\sum\limits_{f_m \in \mathcal{M}}\alpha_{n,m}\beta_m))}
\end{aligned}
\end{equation}
As a result, problem P3 can be transferred to the following problem.
\begin{equation} \label{p4}
\begin{aligned}
\text{\textbf{P4}}~~~~~& {\max\limits_{\mathbf{V},\mathbf{B}} }~~~~{\mathop {\min\limits_{U_n \in \mathcal{N}} }}~V_n=-\frac{\ln\mathbb{E}[e^{-\theta_n R_n(i)T}]}{\theta_n T}\\
\text{s.t.}~~~~~&\text{C1}:~\theta_n=-\frac{\ln\epsilon_n}{V_nT(d_n-d_n^{C}(1-\sum\limits_{f_m \in \mathcal{M}}\alpha_{n,m}\beta_m))}\\
&\text{C2}:~{\sum\limits_{U_n \in \mathcal{N}}B_n\leq B}\\
\end{aligned}
\end{equation}

To solve problem P4, a theorem is further proposed to illustrate how to balance the video coding rate and bandwidth allocation for each user.
\begin{thm}
The optimal solution of problem P4 always satisfies the following conditions
\begin{equation} \label{solcon}
\begin{aligned}
&V_1=V_2=\cdot\cdot\cdot V_n=\cdot\cdot\cdot=V_N\\
&\sum\limits_{U_n \in \mathcal{N}}B_n = B
\end{aligned}
\end{equation}
\end{thm}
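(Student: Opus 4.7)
The plan is a two-step contradiction argument. First I would establish that the bandwidth budget C2 is tight at any optimum; second, that the max-min objective forces all per-user coding rates to coincide.

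For the first step, observe from (\ref{channel capacity}) that $R_n(i)$ is strictly increasing in $B_n$ for any fixed channel realization $h_n(i),l_n(i)$. Consequently $\mathbb{E}[e^{-\theta_n R_n(i)T}]$ is strictly decreasing in $B_n$ for any fixed $\theta_n>0$, and by (\ref{Vn}) the per-user maximum coding rate $V_n^{\max}$ is strictly increasing in $B_n$. Suppose $\sum_n B_n < B$ at an optimum. Allocate the slack $B-\sum_n B_n$ to the user $U_{n^\star}$ attaining the current minimum. This strictly increases $V_{n^\star}$ while leaving the remaining $V_n$ unchanged, so $\min_n V_n$ strictly rises, contradicting optimality.

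For the second step, suppose that at some optimum the coding rates are not all equal. Let $U_a$ attain $V_a=\min_n V_n$ and let $U_b$ be any user with $V_b>V_a$. Transfer a small bandwidth $\Delta>0$ from $U_b$ to $U_a$, keeping C2 tight. By the monotonicity and continuity of $V_n$ in $B_n$ established above, $V_a$ strictly increases and $V_b$ continuously decreases; for sufficiently small $\Delta$ the new $V_b$ still exceeds the new $V_a$. Hence the minimum strictly rises, again contradicting optimality. Iterating this pairwise rebalancing (or applying it simultaneously with a perturbation that sends bandwidth from every above-minimum user to the minimizers) forces $V_1=V_2=\cdots=V_N$ at any optimum, and combined with the first step yields both conclusions in (\ref{solcon}).

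The main obstacle is making the monotonicity and continuity claim rigorous, because $V_n$ is defined implicitly through the coupled system (\ref{Vn}) and constraint C1 of problem P4: both $\theta_n$ and $V_n$ shift when $B_n$ changes. The cleanest route is to eliminate $\theta_n$ by substituting C1 into (\ref{Vn}), obtaining a single implicit equation $F(V_n,B_n)=0$, and then invoking the implicit function theorem after checking that the log moment generating function of $R_n(i)T$ is smooth and convex in $\theta_n$ and strictly monotone in $B_n$. Once this $\partial V_n/\partial B_n>0$ together with continuity is in hand, the two contradiction steps above become routine.
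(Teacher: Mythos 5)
Your proposal matches the paper's proof in essence: both argue by contradiction using the strict monotonicity of $V_n$ in $B_n$, once to show any bandwidth slack could be reassigned to raise the minimum rate, and once to show unequal rates could be rebalanced by a small transfer toward the minimizer. You merely swap the order of the two steps and are somewhat more careful than the paper about the case of multiple minimizers and about the fact that $V_n$ is only defined implicitly through the coupled equations for $\theta_n$ and $V_n$, which the paper asserts without justification.
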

\begin{proof}
Theorem 2 can be proved by contradiction approach.
Assume that when optimal solution $V_{n*}=\min\{V_n:U_n \in \mathcal{N}\}$ is obtained, there still exist two users $U_{n_1}$ with video coding rate $V_{n_1}>V_{n*}$.
Let $B_{n_1}$ and $B_{n*}$ denote the corresponding allocated bandwidth.
Now we reallocate the bandwidth to $U_{n_1}$ and $U_{n*}$ with $\widetilde{B}_{n_1}=(\widetilde{B}_{n_1}-\Delta B)$ and $(B_{n*}+\Delta B)$ respectively, where $\Delta B\rightarrow 0^+$
The new maximum video coding rate can be archived as $\widetilde{V}_{n_1}$ and $\widetilde{V}_{n*}$ respectively according to (\ref{Vn}).
Furthermore, from (\ref{Vn}) and (\ref{channel capacity}), video coding rate $V_n$ is positively related to channel capacity $R_n$ which increases with bandwidth $B_n$.
Therefore, we have $V_{n_1}>\widetilde{V}_{n_1}>\widetilde{V}_{n*}>V_{n*}$, which brings to the contradiction of the assumption that $V_{n*}$ is the optimal solution.
Hence, when problem P4 is optimally solved, the video coding rate for each user must be identical.

Similarly, assume that when optimal solution is obtained as $\mathbf{B}^*$ and $V^*=V_1=\cdot\cdot\cdot=V_n$, there holds $\sum_{U_n \in \mathcal{N}}B_n^* =B^*<B$.
We construct another bandwidth allocation scheme as $\widetilde{B}_n=\frac{B-B*}{N},~\forall U_n \in \mathcal{N}$.
The new obtained video coding rate for each user is denoted by $\widetilde{V}_{n}$.
As video coding rate is an increasing function of bandwidth according to (\ref{Vn}) and (\ref{channel capacity}), we have $\widetilde{V}_{n}>V^*,~\forall U_n \in \mathcal{N}$, which brings to the contradiction of the assumption.
Thus, the network bandwidth should be used up to optimize the video delivery performance.
\end{proof}

Based on Theorem 2, problem P4 can be further transferred to the following feasible problem
\begin{equation} \label{p5}
\begin{aligned}
\text{\textbf{P5}}~~~~~& {\max\limits_{\mathbf{B}} }~~~~~V^*\\
\text{s.t.}~~~~~&\text{C1}:~\theta_n=-\frac{\ln\epsilon_n}{V^*T(d_n-d_n^{C}(1-\sum\limits_{f_m \in \mathcal{M}}\alpha_{n,m}\beta_m))}\\
&\text{C2}:~{\sum\limits_{U_n \in \mathcal{N}}B_n= B}\\
&\text{C5}:~V^*=-\frac{\ln\mathbb{E}[e^{-\theta_n R_n(i)T}]}{\theta_n T},~\forall U_n \in \mathcal{N}\\
\end{aligned}
\end{equation}

Problem P5 can be solved by using the monotonicity between $V_n$ and $B_n$.
Specifically, when $V^*$ is set with a low value, the network bandwidth cannot be used up.
In this case, $V^*$ can be raised.
On the other hand, if $V^*$ is set with a high value, the network bandwidth will be insufficient.
In this case, $V^*$ should be lowered.
With this regard, the optimal $V^*$ can be obtained by bisection exploration.
Additionally, when $V^*$ is fixed, the allocated bandwidth for each user can also be obtained by bisection exploration.
Let $\varphi_V$ and $\varphi_B$ denote the accuracy requirement of the video coding rate and bandwidth allocation.
Algorithm 2 summarizes a double bisection exploration scheme to find out the optimal video coding rate and the bandwidth allocation.
It is easily verified that the computation complexity of Algorithm 2 lies in $O(N\log_2(V^{\max}-V^{\min})\log_2B$).

\begin{algorithm}[t!]
\caption{Double Bisection Exploration Scheme}
\label{alg::conjugateGradient}
\begin{algorithmic}[1]
\State \textbf{Input} $\{\alpha_{i,j}\}$, $\beta$, $B$, $\{d_n\}$, $\{\epsilon_n\}$, $N$, $T$, $\varphi_V$, $\varphi_B$;
\State \textbf{Output} $V^*$ and $\mathbf{B}$.
\State Initialize $B_n=\frac{B}{N},~\forall U_n \in \mathcal{N}$;
\State Calculate $\{V_n\}$ according to (\ref{Vn}) and (\ref{theta});
\State Let $V^{\max}=\max\{V_n\}$ and $V^{\min}=\min\{V_n\}$;
\Repeat
    \State Set middle point $V^{\text{mid}}=\frac{V^{\max}+V^{\min}}{2}$;
    \For{$U_n \in \mathcal{N}$}
        \State Compute $\theta_n$ according to (\ref{theta});
        \State Let $B_n^{\max}=B$ and $B_n^{\min}=0$;
        \Repeat
            \State Set middle point $B_n^{\text{mid}}=\frac{B_n^{\max}+B_n^{\min}}{2}$;
            \State Calculate $\gamma=-\frac{\ln\mathbb{E}[e^{-\theta_n R_n(i)T}]}{\theta_n T}$;
            \If{$V^{\text{mid}}==\gamma$}
                \State Break;
            \Else
                \If{$V^{\text{mid}}>\gamma$}
                    \State $B_n^{\min}=B_n^{\text{mid}}$;
                \Else
                    \State $B_n^{\max}=B_n^{\text{mid}}$;
                \EndIf
            \EndIf
        \Until {$B_n^{\max}-B_n^{\min}\leq \varphi_B$ or $V^{\text{mid}}==\gamma$;}
        \State $B_n=B_n^{\min}$;
    \EndFor
    \If{$\sum_{U_n \in \mathcal{N}}B_n==B$}
        \State Break;
        \Else
            \If{$\sum_{U_n \in \mathcal{N}}B_n>B$}
                \State $V^{\text{max}}=V^{\text{mid}}$;
            \Else
                \State $V^{\text{min}}=V^{\text{mid}}$;
            \EndIf
    \EndIf
\Until {$V^{\max}-V^{\min}=\varphi_V$ or $\sum_{U_n \in \mathcal{N}}B_n==B$}
\end{algorithmic}
\end{algorithm}

\section{Experiments and Results}
In this section, extensive experiments are carried out to analyze the effectiveness of the proposed video caching and delivery schemes.
Particularly, a real data set, i.e., Movielens\cite{8523627}, is employed to validate the adaptability of our proposed video caching scheme.
The dataset contains the information including user attributes, the movie attributes as well as the movie ratings.
To represent the impact of user information on user interest, we apply Personas to mine useful user potential information from the statistical point of view.
Besides, labels are generated in terms of movie ratings.
Moreover, we randomly divide the training set and testing set for model training and interest prediction respectively.

In addition, according to the existing interaction information between users and movies, we generate the interactive information of movies that users have not watched before, and set the label as unknown.
We then randomly select $N$ users to perform experiment.
In addition, consider that the number of unwatched movies is much larger than that of the watch movies for a user, we randomly select $M=3E$ unwatched movies for each user to generate dataset $\mathbf{x}^{new}$.
If not otherwise highlighted, the various involved parameters for video delivery are as follows.
The video requests $\{\alpha_{n,m}\}$ can be obtained from $\mathbf{x}^{new}$.
For the wireless communication part, the transmission power of the AP for each user is set to 100dBm.
The length of a transmission block is set to 0.1s.
Besides, the network bandwidth is set to $50$MHz and the power spectral density of the background noise is set to $N_0=-130$dBm/Hz.
The channel power gain due to small-scale fading is assumed to follow exponential distributed with unit mean.
The path loss is assumed to be $l_n=\rho_{n}^{-2}$ with 30dB power attenuation at a reference distance of 1m, where $\rho_n$ denotes the distance between the AP and $U_n$.
Moreover, $\rho_n$ is randomly selected with range $[15,~20]$m based on Poisson point process (PPP).
For the wire communication part, the transmission delay between the cloud server and edge server is assumed to be 0.1s.

\subsection{Individual Interest Prediction Evaluation}
The performance of the proposed individual interest prediction model is first evaluated in terms of Area Under Curve (AUC) and model accuracy (ACC).
AUC is a general metric to evaluate the learning performance for binary classification problem.
The definition of AUC is as follows
\begin{equation}
\text{AUC} = P\{\widetilde{r}^{new}_{n_1,m_1} > {\widetilde{r}^{new}_{n_0,m_0}}\}
= \frac{{\sum {I({\widetilde{r}^{new}_{n_1,m_1}},{\widetilde{r}^{new}_{n_0,m_0}})} }}{{P*Q}},
\end{equation}
where
\begin{equation}
 {I({\widetilde{r}^{new}_{n_1,m_1}},{\widetilde{r}^{new}_{n_0,m_0}})} = \left\{ {\begin{array}{*{20}{c}}
{1,~\widetilde{r}^{new}_{n_1,m_1} > {\widetilde{r}^{new}_{n_0,m_0}}}\\
{0.5,~\widetilde{r}^{new}_{n_1,m_1} = {\widetilde{r}^{new}_{n_0,m_0}}}\\
{0,~\widetilde{r}^{new}_{n_1,m_1} < {\widetilde{r}^{new}_{n_0,m_0}}}
\end{array}} \right.
\end{equation}
where $P$ denotes the number of negative samples, $Q$ denotes the number of positive samples.

\begin{table}[h]
\begin{center}
\caption{Accuracy Evaluation Method}
\label{table}
\begin{tabular}{|l|l|l|l|}
\hline
\multicolumn{2}{|l|}{\multirow{2}{*}{}}       & \multicolumn{2}{c|}{Prediction} \\ \cline{3-4}
\multicolumn{2}{|l|}{}                        & 1          & 0          \\ \hline
\multicolumn{1}{|c|}{\multirow{2}{*}{Realistic}} & 1 & True Positive (TP)         & False Negative (FN)         \\ \cline{2-4}
\multicolumn{1}{|c|}{}                    & 0 & False Positive (FP)         & True Negative (TN)         \\ \hline
\end{tabular}
\end{center}
\end{table}

For the binary classification problem, we set the decision threshold as $\delta=0.5$, i.e., the prediction result is equal to 1 if the interest prediction score $\widetilde{r}^{new}_{n,m}$ is larger than 0.5, otherwise 0.
According Table I, the accuracy (ACC) of the proposed scheme on individual user interest prediction can be obtained as
\begin{equation}
ACC{\rm{ = }}\frac{{TP + TN}}{{TP + FN + FP + TN}},
\end{equation}
where $TP$, $FN$, $FP$, $TN$ denote the number of true positive samples, that of false negative samples, that of false positive samples and that of true negative samples respectively.

\begin{figure}[htbp]
\centerline{\includegraphics[scale=0.47]{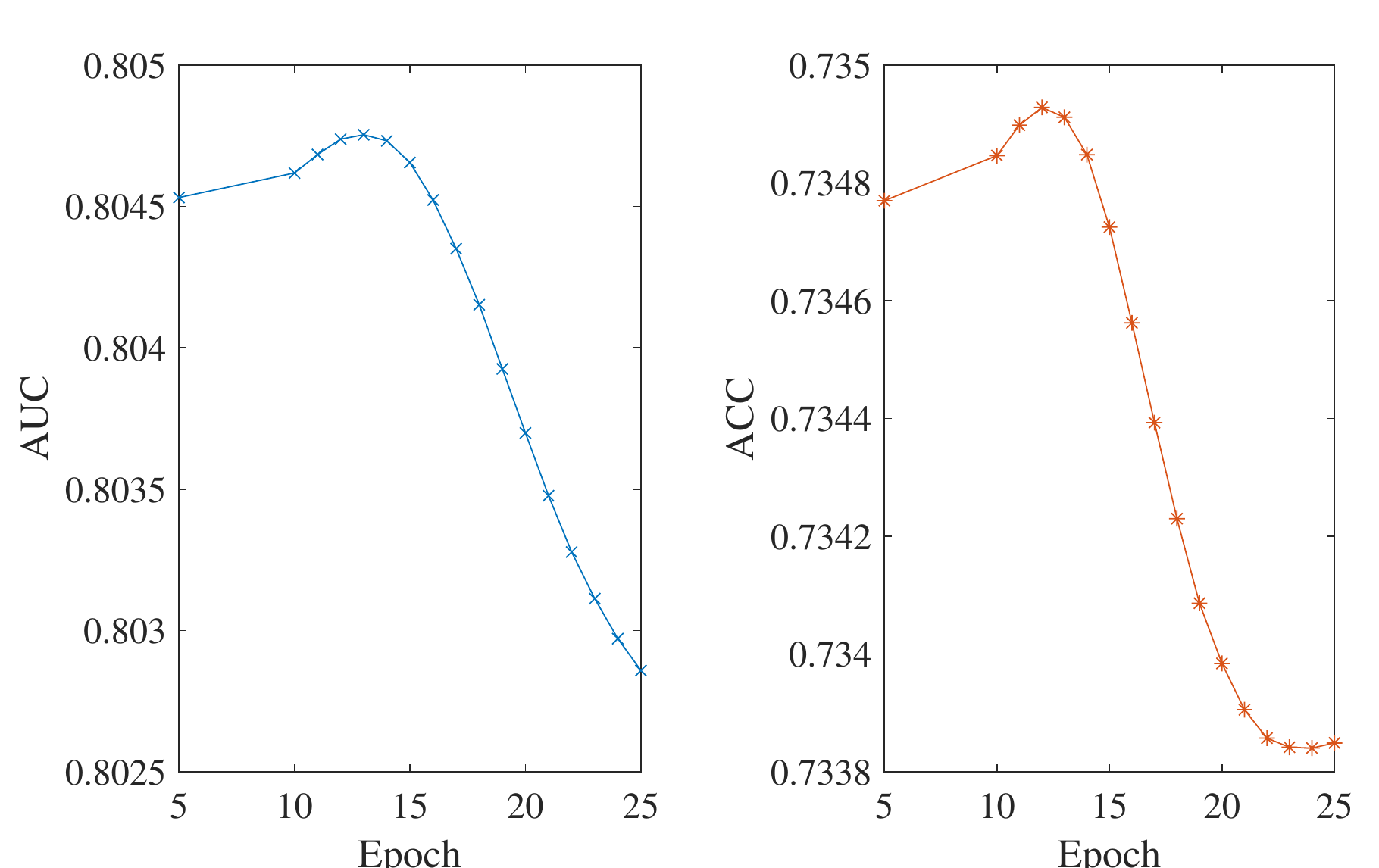}}
\caption{AUC and ACC of the proposed GIACF varying with epochs}
\label{fig3}
\end{figure}

Fig. \ref{fig3} depicts the performance of the proposed on user interest prediction in terms of AUC and ACC.
It is observed that there exists optimal epoch configuration (i.e., eopch=13) to obtain maximum AUC and ACC.
Additionally, it is verified that the proposed scheme can guarantee high AUC and ACC, which implies its effectiveness on individual user interest prediction.

\subsection{Video Caching Performance Evaluation}
In this subsection,  the performance of user hit rate (UHR) under the proposed scheme is first analyzed.
As mentioned earlier, the user hit rate is defined as the proportion of the users whose requests are hit at the edge server, i.e.,
\begin{equation}
\text{UHR} = \frac{\sum\limits_{U_n \in \mathcal{N}}\sum\limits_{f_m \in \mathcal{M}}\alpha_{n,m}\beta_m}{\sum\limits_{U_n \in \mathcal{N}}\sum\limits_{f_m \in \mathcal{M}}\alpha_{n,m}}.
\end{equation}
Additionally, two baseline schemes are considered here, which are the popularity-based scheme and the Top-K scheme \cite{liu2019deep}.
The popularity-based scheme caches contents according to the historical content popularity.
For the Top-K scheme, user interest is firstly predicted with our proposed FM and MLP merging scheme, and $E$ videos with highest individual interest prediction scores are then cached.

\begin{figure}[htbp]
\centerline{\includegraphics[scale=0.55]{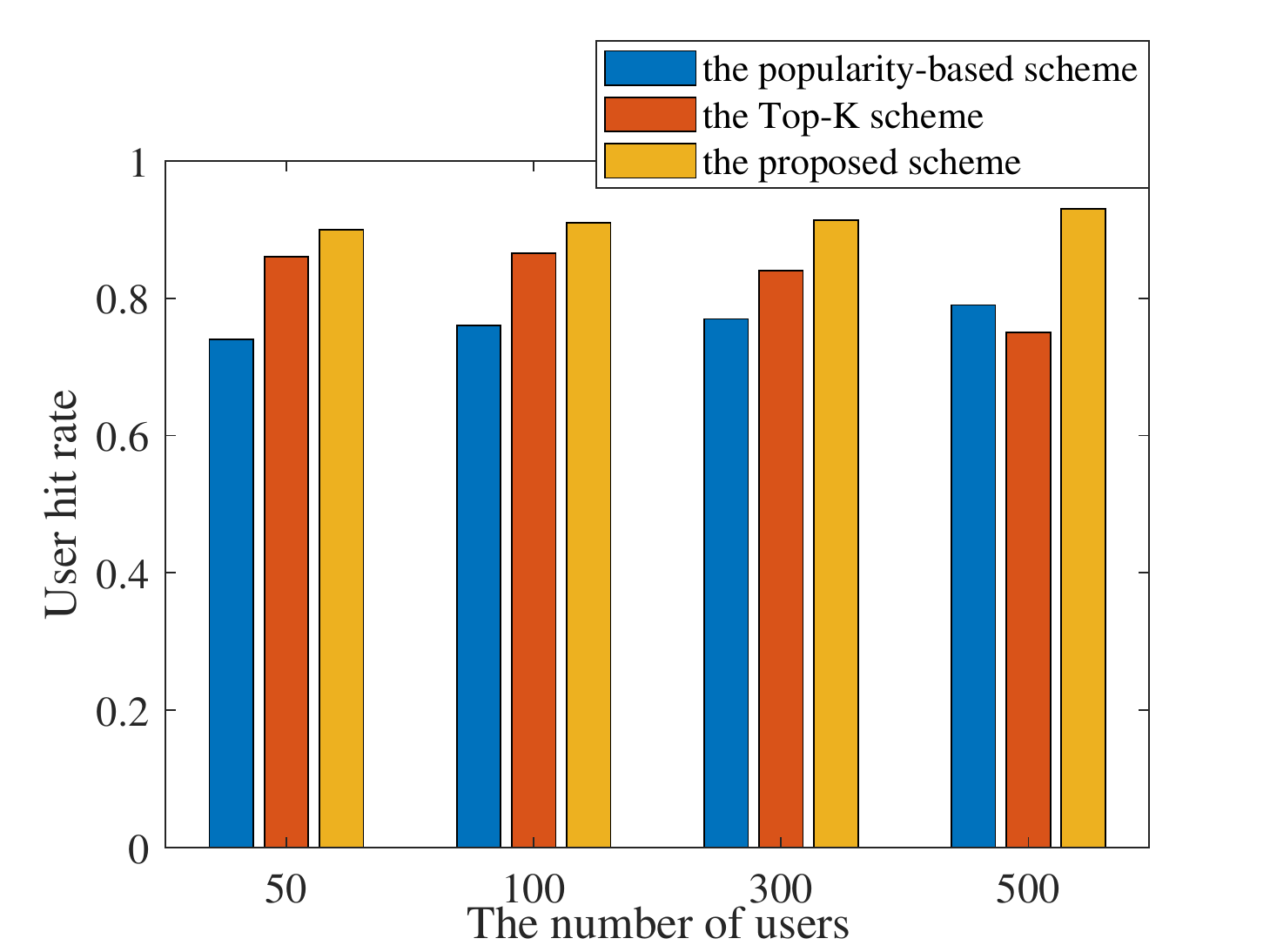}}
\caption{Comparison of different caching schemes on user hit rate}
\label{fig4}
\end{figure}

Fig. \ref{fig4} depicts the UHR performance under different schemes where the caching capacity is set as $E=100$.
It is observed that the UHR increases with the number of users under the popularity-based scheme.
It can be explained that the historical content popularity is equivalent to a group content popularity.
Differently, the Top-K scheme caches contents in terms of individual user interest.
If caching capacity is enough (e.g., $E=100$, $N\leq 100$), the UHR under the Top-K scheme is invariant.
As the number of users increases (e.g., $N\geq300$), however, the caching capacity is not enough to guarantee the individual user interest for each user.
At this time, the UHR of the Top-K scheme degrades, which implies that it is not applicable to apply individual user interest to make edge caching decision directly.
Additionally, the UHR under the proposed scheme is less sensitive to the number of users while comparing with the popularity-based scheme.
The reason is that the group interest of a community is statistically stable during a specific time.
As our proposed scheme can capture the group interest of a community accuracy, the UHR under the proposed scheme is stable.
Moreover, it is observed that the proposed scheme can guarantee a higher UHR than the other baseline schemes.
This validates the effectiveness of the proposed social aware similarity model on group interest characterization.

\begin{figure}[htbp]
\centerline{\includegraphics[scale=0.55]{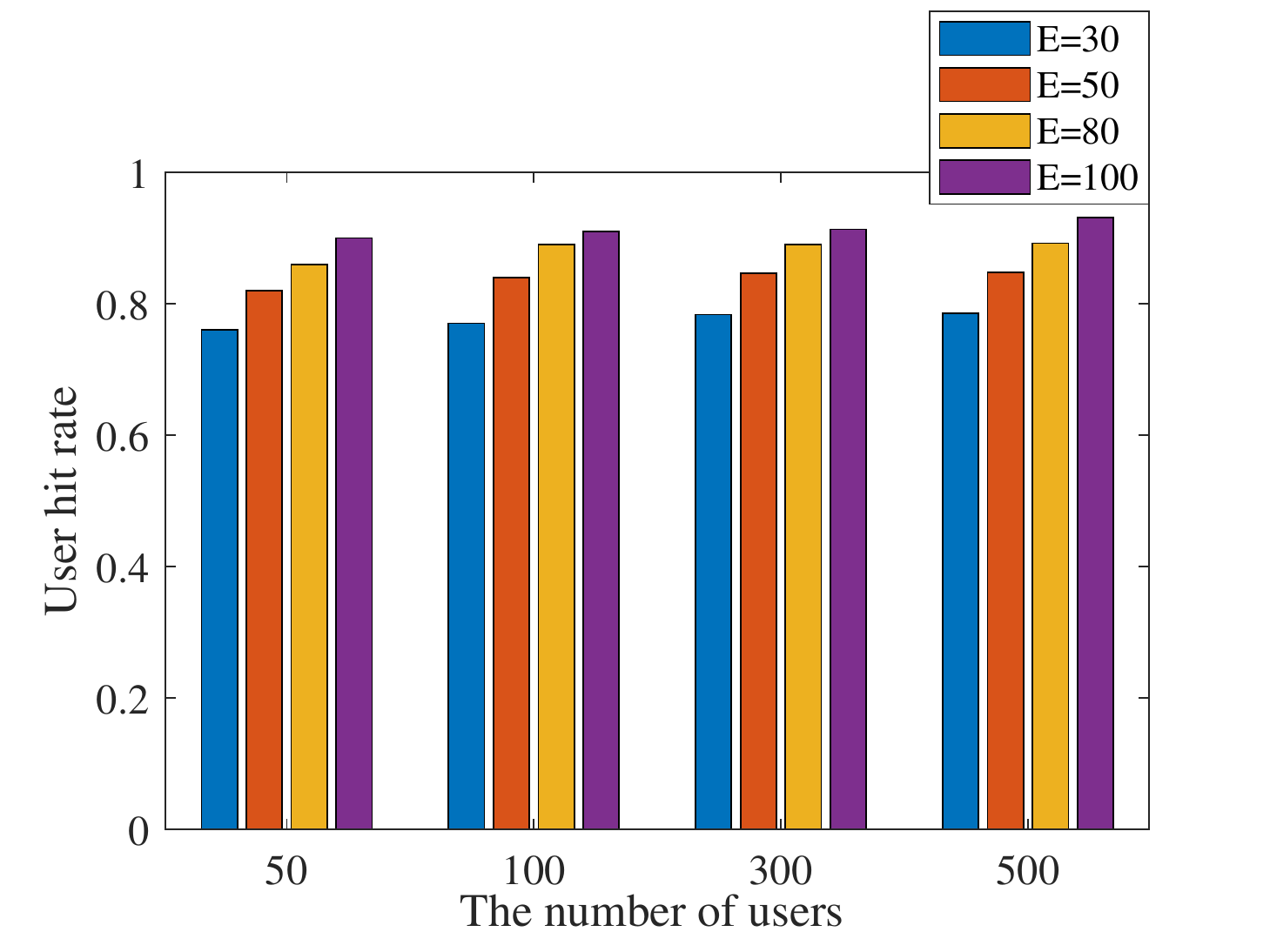}}
\caption{User  hit rate under different configurations of caching capacity and the number of users}
\label{fig5}
\end{figure}

Fig. \ref{fig5} depicts the UHR performance under different configurations of caching capacity $E$ and the number of users $N$.
It is found that the UHR increases as the caching capacity increases.
This is because caching more contents ensures a higher probability to hit the user requests.
Besides, increasing $N$ can bring gain to the UHR when $N$ is small (e.g., $N\leq100$), since the statistical information of group interest is not prominent in this case.
As $N$ increases (e.g., $N\geq100$), the statistical information of group interest turns to be stable, which implies the UHR is convergent.

In addition to user hit rate, content hit rate (CHR) is also an important metric to evaluate the content caching efficiency for the edge server.
Specifically, CHR is defined as the the proportion of cached contents that are requested by users, i.e.,
\begin{equation}
\text{CHR} = \frac{\sum\limits_{f_m \in \mathcal{M}}\min\{\sum\limits_{U_n \in \mathcal{N}}\alpha_{n,m},1\}\beta_m}{E}.
\end{equation}

\begin{figure}[htbp]
\centerline{\includegraphics[scale=0.55]{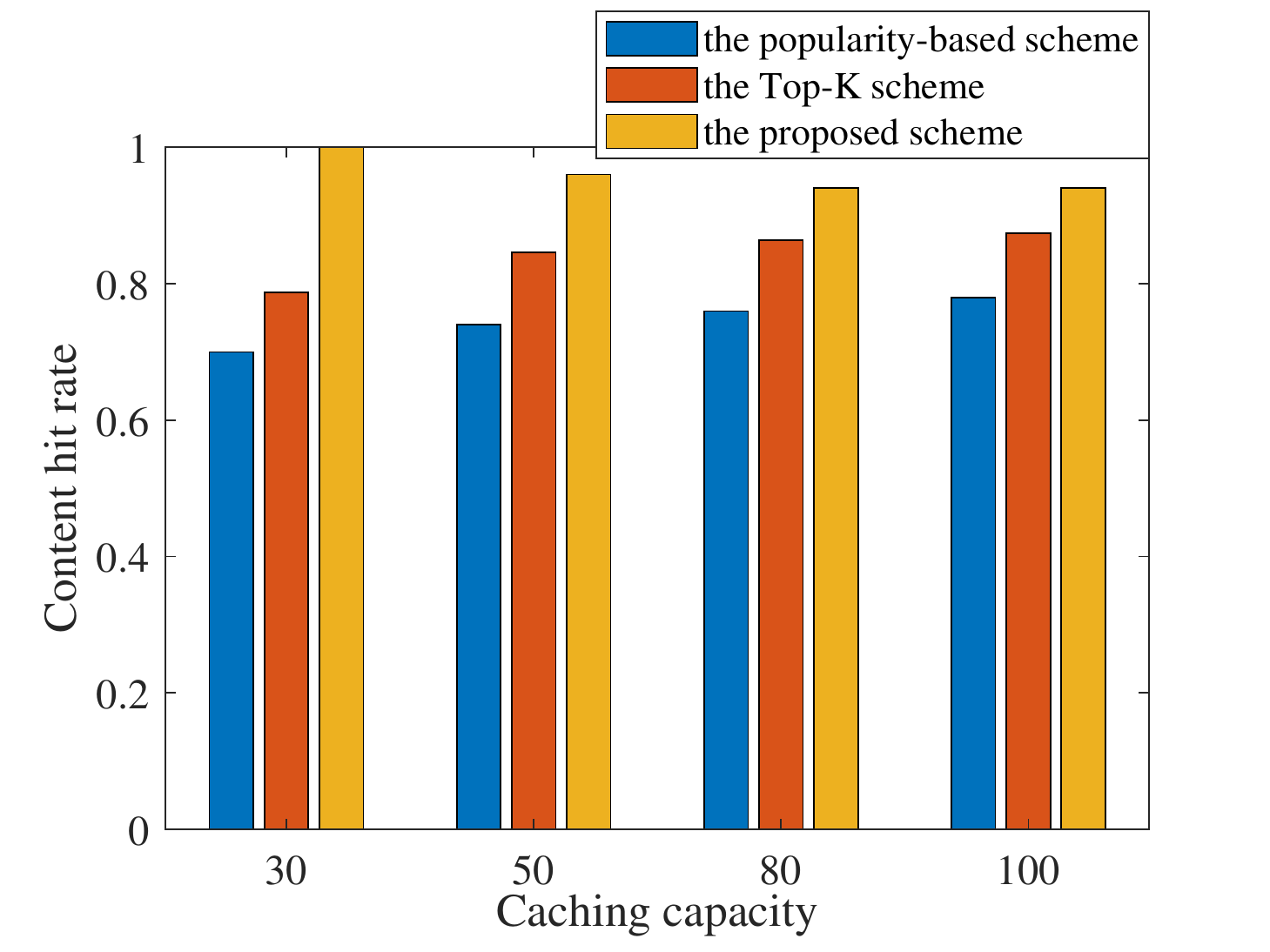}}
\caption{Comparison of different caching schemes on content hit rate}
\label{fig6}
\end{figure}

Fig. \ref{fig6} depicts the relationship between the content hit rate and the edge caching capacity, where the number of users is set to $N=100$.
It is verified that the proposed scheme outperforms the baseline schemes.
Besides, we observe that the CHR under the popularity-based scheme and that under the Top-K scheme both increase as the caching capacity increases.
This is because the popularity-based scheme caches videos according to video historical popularity.
Under the popularity-based scheme, some cached videos may be watched by the users before, which may degrade the CHR, since a user is usually willing to watch a video only once.
Hence, increasing the caching capacity will allow more videos to be cached, such that the CHR of popularity-based scheme can be improved.
The Top-K scheme caches videos based on the individual user interest only.
As one user may show strong interest on multiple videos, the Top-K scheme may cache large number of videos for some specific users within a caching decision duration.
A user, however, cannot watch too many videos within a caching decision duration.
Therefore, a larger caching capacity is able to allow the edge server to cache videos for more users whose interests are predicted to be comparably weaker.
Differently, the CHR under the proposed scheme deceases as the caching capacity increases.
Particularly, the proposed scheme can guarantee a nearly perfect CHR when edge caching capacity is small (i.e., $E=30$).
The reason is that the proposed scheme can acquire good representation of group interest.
As a result, the videos strongly associated to the group interest are predicted with high scores and cached accurately.
As the caching capacity increases, more videos that are not interested in by the users are cached, which results in CHR degradation and high economic cost.
However, from Fig. \ref{fig5}, larger caching capacity is able to guarantee higher UHR.
Hence, the tradeoff between UHR and CHR carefully should be taken into account.

\subsection{Video Delivery Performance Evaluation}
\begin{figure}[htbp]
\centerline{\includegraphics[scale=0.55]{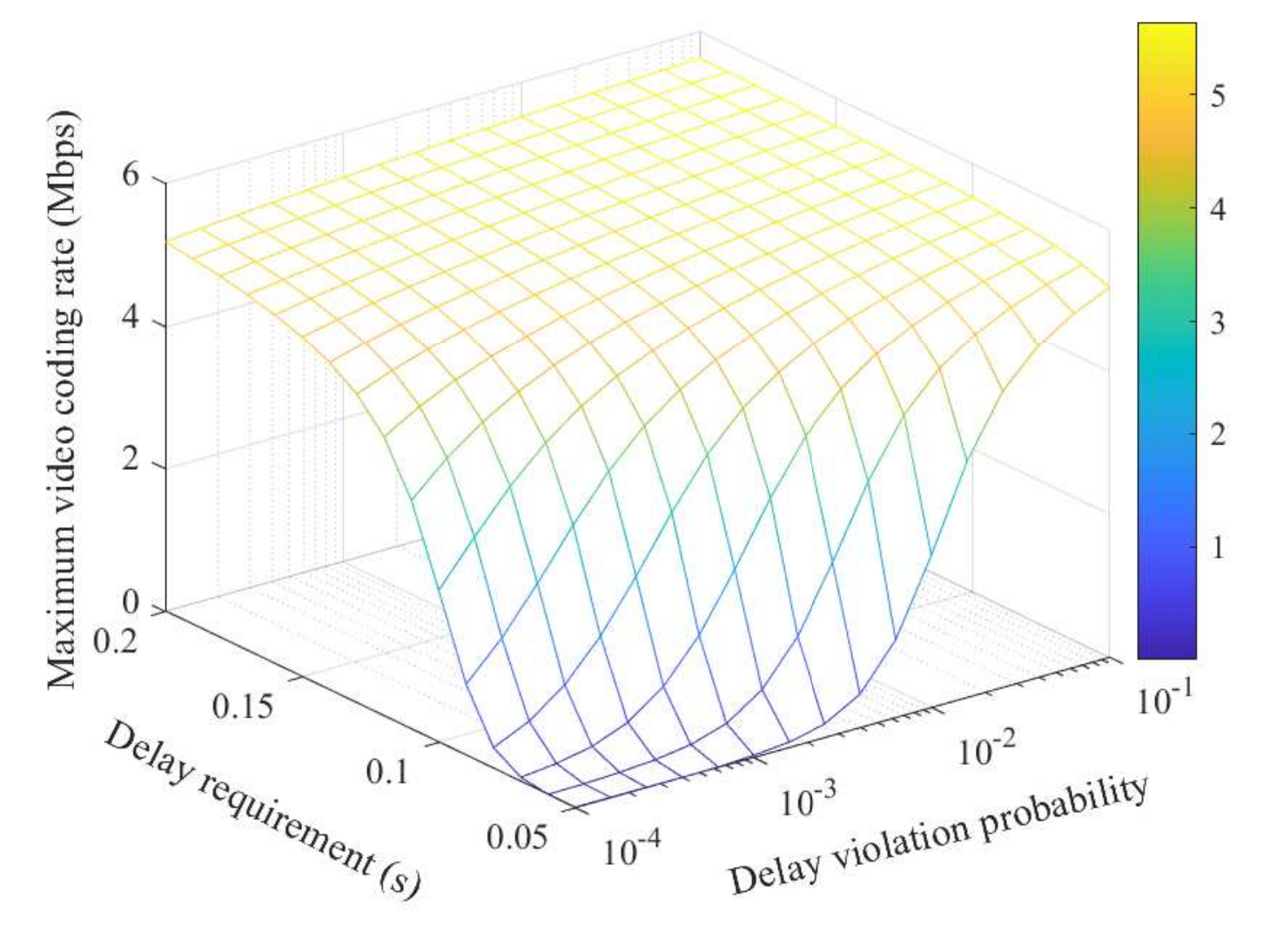}}
\caption{Individual video coding rate under different delay constraints}
\label{fig7}
\end{figure}

Fig. \ref{fig7} depicts the maximum individual video coding rate $V_n$ that can be sustained by a wireless link under different delay constraints.
The bandwidth of the link is set to 0.5MHz.
The distance between the AP and the user is set to 20m.
It is observed that $V_n$ increases with both the delay requirement $d_n$ and the tolerant delay violation probability $\epsilon_n$.
This is because larger $d_n$ or $\epsilon_n$ means looser delay constraint, which therefore allow higher video coding rate.
As the delay constraint is loosen enough, the sustained video coding rate turns to be convergent.
In this case, the video delivery is delay-tolerant and the video coding rate approaches to the mean channel capacity.

\begin{figure}[htbp]
\centerline{\includegraphics[scale=0.55]{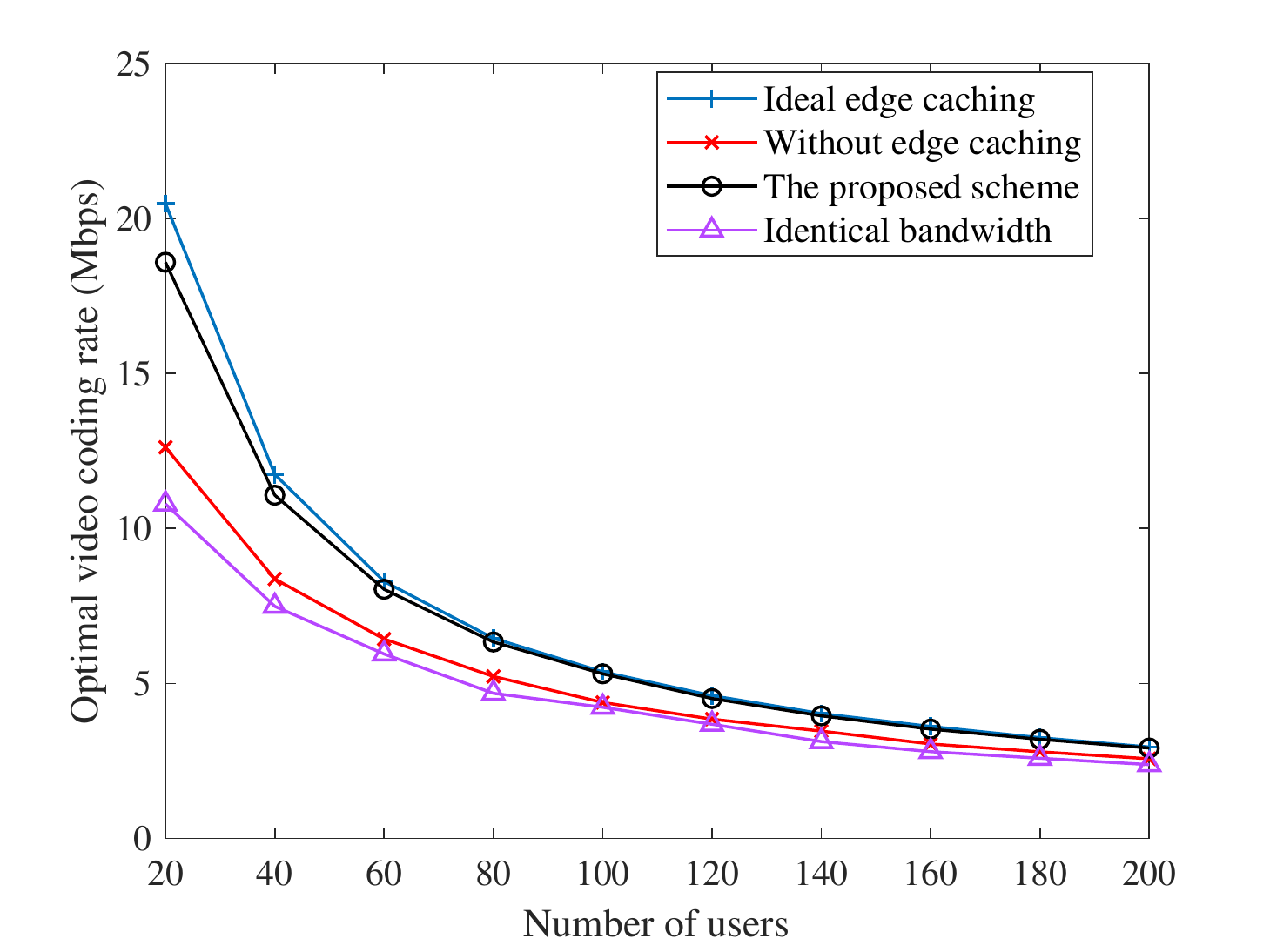}}
\caption{Optimal video coding rate under user fairness requirement}
\label{fig8}
\end{figure}

\begin{figure}[htbp]
\centerline{\includegraphics[scale=0.55]{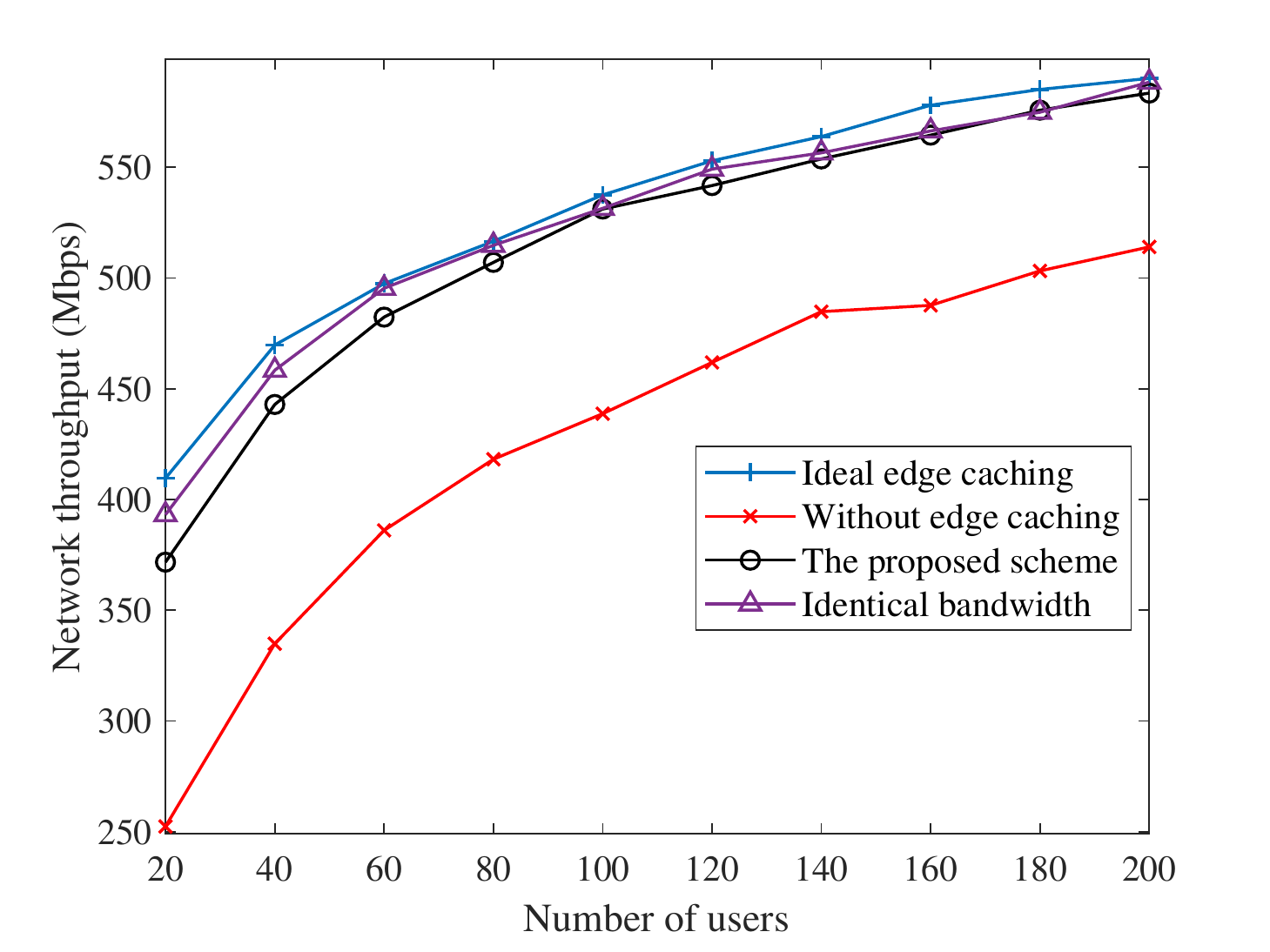}}
\caption{Network throughput under different number of users}
\label{fig9}
\end{figure}

Fig. \ref{fig8} and Fig. \ref{fig9} depict the performance of individual video coding rate and network throughput respectively.
The delay constraint is set to $d_n=0.2$s and $\epsilon_n=0.001$ for any $U_n \in \mathcal{N}$.
To verify the effectiveness of the proposed cloud-edge collaboration framework and that of the proposed video delivery scheme, three baseline schemes are introduced to perform comparison.
The ideal edge caching scheme assumes that all the requested videos are hit at the edge server, which avoids the transmission delay from the cloud server to the edge server.
The without edge caching scheme assumes that all the requested videos are delivery from the cloud server to users, which can be considered as a worst case.
In identical bandwidth scheme, we first apply the proposed caching scheme to cache videos first and then delivery videos to users with identical bandwidth.

In Fig. \ref{fig8}, we observe that the sustained video coding rate decreases as the number of users increases.
This is because the network bandwidth is fixed, more users means less allocatable resources for each user.
Besides, the performance of the proposed scheme is close to the ideal scheme.
The reason is that the proposed scheme caches video for users with high accuracy, most of the video requests can thus be responded by the edge server.
Compared with the scheme with edge caching, it is verified that deploying edge server can sustain higher video coding rate significantly, such that the quality of user experience can be improved.
Additionally, the proposed scheme is able to sustain higher minimum individual video coding rate than the scheme that allocates identical bandwidth to each user.
As a result, user fairness can be guaranteed by our scheme.

In Fig. \ref{fig9}, the network throughput is defined as the sum of video coding rate for each user.
It is observed that the network throughput increases with the number of users, which implies that all the schemes can achieve statistical multiplexing gain.
Besides, the network throughput performance under the identical bandwidth scheme is slightly higher than that under the proposed scheme.
It can explained that user fairness is guaranteed at the expense of performance degradation of some users that are with good channel quality.

\section{Conclusions}
In this paper, a video service enhancement strategy was studied under the proposed edge-cloud collaboration framework.
An optimization problem was first formulated to guarantee the user fairness in terms of video coding rate.
A hybrid human-artificial intelligence scheme was then proposed to help the cloud server to make video caching decision.
In addition, we proposed a double bisection exploration scheme to help the edge server to carry out video delivery management while statistical delay constraints are guarantee at the same time.
Experiment results validated the proposed video caching scheme and video delivery scheme were both able to guarantee high performance compared to other baseline schemes.
The analysis in this paper sheds new insight on video service enhancement through group interest mining and statistical delay guarantee.

\bibliographystyle{IEEEtran}
\bibliography{TMM}

\end{document}